\documentclass[12pt,halfline,a4paper]{ouparticle}

\usepackage{float}
\usepackage{amsthm}
\usepackage{xcolor}
\usepackage{dsfont}
\usepackage{graphicx}
\usepackage{subcaption}
\usepackage{hyperref}
\usepackage{booktabs}
\usepackage{tabularx}
\usepackage{array}
\usepackage{makecell}
\usepackage{enumitem}

\newtheoremstyle{boldhead}{3pt}{3pt}
{\itshape}{}{\bfseries}{.}{ }
{\thmname{#1}\thmnumber{ #2}\thmnote{ [#3]}}
\theoremstyle{boldhead}

\newtheorem{proposition}{Proposition}

\newtheorem{cor}{Corollary}

\newtheorem{ex}{Example}

\usepackage{mathtools}

\begin{document}

\title{Generalizing Markowitz Portfolio Optimization by a Quadratic Risk Measure}

\author{%
\name{Ignas Gasparavičius, Andrius Grigutis}
\address{Institute of Mathematics, Vilnius University\\
Naugarduko g. 24, LT-03225, Vilnius, Lithuania}
\email{ignas.gasparavicius@mif.stud.vu.lt, andrius.grigutis@mif.vu.lt}
}

\abstract{We show that the key optimization results of the classical Markowitz portfolio selection theory, originally formulated for variance as the risk measure, remain available in explicit closed form under a broader class of strictly convex quadratic risk measures. The proposed framework replaces the covariance matrix with an arbitrary symmetric positive definite matrix and allows additional linear and constant terms, thereby containing
various models arising in transaction cost optimization, benchmark relative optimization, covariance regularization, and factor models. Closed-form formulas are obtained for the efficient frontier, the global minimum risk portfolio, the maximum Sharpe ratio portfolio,
the Capital Market Curve, the tangency portfolio, and the maximum utility portfolio. In contrast to the classical Markowitz model, the tangency portfolio does not coincide with the maximum Sharpe ratio portfolio, revealing a new geometric phenomenon. A numerical example confirms the derived formulas.}

\date{\today}

\keywords{portfolio selection theory, quadratic risk measure, efficient frontier, Capital Market Curve, tangency portfolio, closed-form solutions
}

\msc{91G10, 90C20, 90C25, 91G80}

\maketitle

\section{Introduction}\label{sec1}

\subsection{Prior work, notations}
This article is motivated by the work \cite{GG2024}, which was devoted to the mathematical foundations of Markowitz's portfolio selection theory, also known as modern portfolio theory. In modern portfolio theory, the risk measure is variance. In fact, variance is one of the simplest risk measures and admits explicit closed-form formulas for the efficient frontier, involved portfolios, and other quantities. On the other hand, variance has limitations in properly estimating tail risk, considering gains and losses equally, etc. This can be improved by using other risk measures such as conditional value at risk, expected shortfall, copula-based measures, and other downside risk-sensitive measures.  However, for many alternative risk measures, the analytical tractability is lost, and the efficient frontier and related portfolios must be computed numerically. 

In this work, we introduce a generalized quadratic risk measure $R_P$ \eqref{mean_var_def} for which the entire portfolio optimization theory remains analytically tractable. The introduced risk measure $R_P$ gives considerably more flexibility than variance while remaining available in explicit closed form, and reveals geometric phenomena that disappear in the classical mean-variance model: the tangency portfolio is not the same as the maximum Sharpe ratio portfolio; see the comment beneath Proposition \ref{P6}. To the best of our knowledge, explicit closed-form expressions of this generality are not available in the existing literature for quadratic risk measures beyond variance. 

Let us denote the set of portfolios
\begin{align*}
\mathcal{P}=\left\{{\pmb w}:=(w_1,\,w_2,\,\ldots,\,w_n)\in\mathbb{R}^n: \, \sum_{i=1}^{n}w_i=1\right\},\quad n\in\mathbb{N}.
\end{align*}

Thus, a portfolio is defined as a vector $\mathbf {\pmb w}\in\mathcal P$, where each component (weight) $w_i$ denotes the fraction of the total investment allocated to the $i$'th asset $A_i$, $i=1,\,2,\,\ldots,\,n$. If $w_i<0$ for some $i$, then asset $i$ is said to be held in a short position; that is, the asset is borrowed and sold with the intention that it can later be repurchased at a lower price.

Let ${\pmb R}:=(r_1,\,\ldots,r_n)$ denote a collection of real-valued random variables defined on the same probability space, representing the random returns of assets $A_1,\,\ldots,\,A_n$. Let
\begin{align*}
{\pmb \mu}:=(\mu_1,\,\mu_2,\,\ldots,\,\mu_n)=\left(\mathbb{E}r_1,\,\mathbb{E}r_2,\,\ldots,\,\mathbb{E}r_n\right)
\end{align*}
be the vector of expected returns, and
\begin{align*}
Q:= \begin{pmatrix}
q_{1,1} & q_{1,2}& \ldots & q_{1,n}\\
q_{2,1}&q_{2,2}&\ldots&q_{2,n}\\
\vdots & \vdots & \ddots & \vdots\\
q_{n,1} & q_{n,2}& \ldots & q_{n,n}
\end{pmatrix}
\end{align*}
be a symmetric positive definite matrix characterizing the dependence structure of the asset returns $r_1$, $r_2$, $\ldots$, $r_n$. For instance, $Q$ can be based on a covariance matrix.

Let the random variable $P:=w_1r_1+w_2r_2+\cdots+w_nr_n={\pmb w}{\pmb R}^T$ denote the random return of the portfolio ${\pmb w}$. For the portfolio's return $P$, we define its expected value $\mu_P$ and the generalized quadratic risk measure $R_P$:
\begin{align}\label{mean_var_def}
\mu_P({\pmb w}):={\pmb \mu}{\pmb w}^T,\quad R_P({\pmb w}):=\frac{1}{2}{\pmb w}Q{\pmb w}^T + {\pmb c}{\pmb w}^T + k.
\end{align}
The vector ${\pmb c}:=(c_1,\,c_2,\,\ldots,\,c_n)\in\mathbb{R}^n$ is referred to as the linear adjustment vector. It models investor-specific preferences and may represent transaction costs, benchmark effects, penalties, or other linear preference terms. The constant $k\in\mathbb{R}$ serves as a normalization constant. It is defined as a real number such that
\begin{align}\label{k_def}
k\geqslant \frac{1}{2}{\pmb c}Q^{-1}{\pmb c}^T.
\end{align}
This guarantees $R_P({\pmb w})\geqslant0$ for all ${\pmb w}\in\mathcal{P}$. Indeed, by completing the square,
\begin{align}\label{risk_non_neg}
R_P({\pmb w})=\frac{1}{2}{\pmb w}Q{\pmb w}^T+{\pmb c}{\pmb w}^T + k = \frac{1}{2}\left({\pmb w} +{\pmb c}Q^{-1}\right)Q\left({\pmb w} +{\pmb c}Q^{-1}\right)^T - \frac{1}{2}{\pmb c}Q^{-1}{\pmb c}^T + k\geqslant0
\end{align}
iff \eqref{k_def} holds. It is clear that the risk measure $R_P$ reduces to variance if ${\pmb c}={\pmb 0}$, $k=0$, and $Q=2\Sigma$, where $\Sigma$ is the covariance matrix. 

In the classical mean-variance framework, the investor maximizes expected return penalized by variance. Replacing the covariance matrix by a general symmetric positive definite matrix $Q$, and allowing a linear penalty ${\pmb c}{\pmb w}^T$ together with a constant $k$, leads to a more general quadratic utility function. This retains the Markowitz trade-off between reward and risk, but allows the risk criterion to include effects not captured by covariance alone. 

The purpose of this paper is to show that, despite this additional generality, the principal results of Markowitz portfolio theory, including the minimum risk portfolio, the efficient frontier, the maximum Sharpe ratio portfolio, the mutual fund separation theorem, and the maximum utility portfolio, remain available in explicit closed-form.

\subsection{Basic properties of risk measures}

In this subsection, we summarize the fundamental properties of the generalized quadratic risk measure $R_P$ defined in \eqref{mean_var_def} and explain how these properties influence the geometry of portfolio optimization in the risk--return plane. According to \eqref{risk_non_neg}, $R_P(\pmb{w})\geqslant 0$ for all ${\pmb w} \in \mathcal{P}$. Moreover, it is a strictly convex risk measure, since its Hessian matrix is $Q$, i.e.,
\begin{align*}
\nabla^2R_P({\pmb w}) = Q,
\end{align*}
where $Q$ is an assumed symmetric and positive definite matrix. Strict convexity is sufficient for all optimization problems considered in this paper to admit unique closed-form solutions.

On the other hand, $R_P$ is not a coherent risk measure. Recall that a risk measure $\rho({\pmb w}):\mathbb{R}^n\to\mathbb{R}$ is called coherent if it satisfies the following four axioms (adopted from \cite{Artzner1999}):
\begin{itemize}
\item {\bf Monotonicity}. If two portfolios ${\pmb w_1},\,{\pmb w_2}\in\mathcal{P}$ are such that ${\pmb R}{\pmb w}_1^T\leqslant {\pmb R}{\pmb w}_2^T$ almost surely, then $\rho({\pmb w}_1)\geqslant \rho({\pmb w}_2)$. In other words, if one portfolio almost surely dominates another, its risk should not be greater than that of the dominated portfolio.

\item {\bf Translation invariance}. When a risk-free asset is included, we augment the portfolio vector to
${\pmb w}_F=(w_1,\,\ldots,w_n,w_f)\in\mathbb R^{n+1}$, where $w_f$ denotes the proportion invested in the risk-free asset. Then, for every additional investment $a\in\mathbb R$ in the risk-free asset and vector ${\pmb e}_F=(0,\,0,\,\ldots,\,0,\,1)\in\mathbb{R}^{n+1}$, it holds
$\rho({\pmb w}_F+a{\pmb e}_F)=\rho({\pmb w}_F)-ar_f$, where $r_f$ is a risk-free return rate. In other words, increasing the investment in the risk-free asset by an amount $a>0$ decreases the portfolio risk by the corresponding deterministic return.

\item {\bf Positive homogeneity}. If $\lambda\geqslant0$, then 
$\rho(\lambda{\pmb w})=\lambda\rho({\pmb w})$. In other words, scaling the portfolio by a factor $\lambda$ scales its risk by the same factor.

\item {\bf Subadditivity}. If ${\pmb w}_1 + {\pmb w}_2 \in \mathcal{P}$  then
$\rho({\pmb w}_1+{\pmb w}_2)\leqslant \rho({\pmb w}_1) + \rho({\pmb w}_2)
$. Meaning that diversification should not increase risk.
\end{itemize}

Unlike the standard deviation, $R_P$ is not, in general, positively homogeneous. The loss of this property leads to the separation of the tangency portfolio from the maximum Sharpe ratio portfolio and the replacement of the Capital Market Line by a nonlinear Capital Market Curve; see Proposition \ref{P5} and comment beneath Proposition \ref{P6}.

Although the proposed quadratic risk measure is generally not coherent, this is intentional. Positive homogeneity assumes that risk scales proportionally with portfolio size. This assumption may fail in the presence of transaction costs, liquidity effects, taxes, or other nonlinear features of financial markets. The additional linear and constant terms, {\pmb c} and $k$ in \eqref{mean_var_def} respectively, provide flexibility to model such effects while not losing the tractability of quadratic optimization.

\subsection{Possible applications of the generalized quadratic risk measure}

In this subsection, we consider possible choices of $Q$ and ${\pmb c}$ in the quadratic risk measure $R_P$ \eqref{mean_var_def}. Let us start with the matrix $Q$. Sample covariance matrices are often unstable, particularly when the number of assets is large relative to the available sample size \cite{LedoitWolf2004}. One common solution is to shrink the sample covariance matrix toward a more stable, structured matrix
\begin{align*}
Q=(1-\lambda)\Sigma+\lambda I, \quad 0\leqslant\lambda \leqslant 1
\end{align*}
where $I$ is the identity matrix or, more generally,
\begin{align*}
Q=(1-\lambda)\Sigma+\lambda T, \quad 0\leqslant\lambda \leqslant 1,
\end{align*}
where $T$ is a prescribed symmetric and positive definite target matrix; see, for example, \cite{LedoitWolf2003}, \cite{LedoitWolf2004}, \cite{SchaferStrimmer2005}. 

Alternatively, the matrix $Q$ may be obtained from covariance (or scatter) estimators, such as the Minimum Covariance Determinant estimator \cite{RousseeuwVanDriessen1999}, Tyler's
$M$-estimator \cite{Tyler1987}, or Huber's $M$-estimator
\cite{HuberRonchetti2009}. Such estimators reduce the influence of
outliers and provide more robust estimates of the dependence structure
than the sample covariance matrix; see also the monograph
\cite{Maronna2019}.

Another possibility is to construct the matrix $Q$ using a factor
model, which, instead of estimating the full covariance matrix directly,
assumes that a small number of common risk
factors drive asset returns. Classic
examples include the Capital Asset Pricing Model and the
Fama--French three-factor model \cite{FamaFrench1993}. Under appropriate factor model assumptions, $Q$ admits the decomposition
\[
Q=\beta\Sigma_F\beta^T+D,
\]
where $\beta$ is the factor loading matrix,
$\Sigma_F$ is the covariance matrix of the common risk factors, and
$D$ is a diagonal matrix of idiosyncratic variances. Such
representations substantially reduce the number of parameters that
must be estimated while preserving positive definiteness; see, for
example, \cite{FanLiaoMincheva2011}, \cite[Chap.~6]{Kissell2014}.

Let us turn to the linear adjustment vector {\pmb c}. The linear term has a clear geometric interpretation. Since $Q$ is positive definite, the risk measure $R_P$ \eqref{mean_var_def} can be written in the completed-square form \eqref{risk_non_neg}. Consequently, the vector ${\pmb c}$ shifts the center of the quadratic risk surface from the origin to $-Q^{-1}{\pmb c}$. In the classical variance-based model, the investor is penalized solely for portfolio exposure measured relative to the origin. Under the proposed quadratic risk measure, the investor is penalized for deviations from the reference portfolio $-Q^{-1}{\pmb c}$. Depending on the application, this reference portfolio may represent a benchmark allocation,
transaction-cost adjustment, or another preferred exposure. Let us consider three examples.

First, the classical Markowitz model assumes frictionless
trading and therefore ignores the cost of changing portfolio weights.
If ${\pmb w}_{\rm old}$ and ${\pmb w}_{\rm new}$ denote the current and
new portfolios, respectively, a common quadratic transaction cost model (see \cite{BoydConvexPortfolio}, \cite{Peng2011} or \cite{Wang2008})
is
\begin{align}\label{to_expand}
({\pmb w}_{\rm new}-{\pmb w}_{\rm old})
\Lambda
({\pmb w}_{\rm new}-{\pmb w}_{\rm old})^T,
\end{align}
where the positive definite and symmetric matrix $\Lambda$ measures trading cost intensity or market impact. Expanding the quadratic form \eqref{to_expand} yields
\begin{align*}
{\pmb w}_{\rm new}\Lambda{\pmb w}_{\rm new}^T
-
2{\pmb w}_{\rm old}\Lambda{\pmb w}_{\rm new}^T
+
{\pmb w}_{\rm old}\Lambda{\pmb w}_{\rm old}^T,
\end{align*}
which is precisely of the form \eqref{mean_var_def} with
\begin{align*}
Q=2\Lambda,\qquad
{\pmb c}=-2{\pmb w}_{\rm old}\Lambda,\qquad
k={\pmb w}_{\rm old}\Lambda{\pmb w}_{\rm old}^T.
\end{align*}

A second application is benchmark-relative portfolio optimization; see \cite{DeNardLedoitWolf2025}.
Given a benchmark portfolio ${\pmb b}\in\mathcal P$, one often minimizes
the tracking error variance
\[
({\pmb w}-{\pmb b})
\Sigma
({\pmb w}-{\pmb b})^T,
\]
which again expands into a quadratic function of the form
\eqref{mean_var_def}. 

A third application is regularized portfolio optimization; see \cite{DeMiguel2009}, \cite{ZhaoKongQi2021}. Since the classical Markowitz theory is sensitive to estimation errors in expected returns and the covariance matrix $\Sigma$, one often adds a quadratic
penalty to stabilize the optimization problem. For example, the appended quadratic penalty $\lambda {\pmb w}{\pmb w}^T$ leads to the modified risk measure
\begin{align*}
{\pmb w}\Sigma{\pmb w}^T+\lambda {\pmb w}{\pmb w}^T
=
{\pmb w}(\Sigma+\lambda I){\pmb w}^T,
\end{align*}
which corresponds to
\[
Q=2(\Sigma+\lambda I),
\qquad
{\pmb c}={\pmb0},
\qquad
k=0.
\]
More generally, regularization toward a target portfolio
${\pmb w}_T$ yields,
\begin{align*}
{\pmb w}\Sigma{\pmb w}^T+\lambda\left({\pmb w}-{\pmb w}_T\right)\left({\pmb w}-{\pmb w}_T\right)^T,
\end{align*}
which also fits into the proposed framework, producing suitable values of
$Q$, ${\pmb c}$ and $k$.

The proposed quadratic risk measure $R_P$ \eqref{mean_var_def} inherits one important limitation of variance: it is symmetric, penalizing positive and negative deviations
in the same quadratic manner. Consequently, it does not distinguish upside volatility from downside risk. On the other hand, quadratic optimization extends beyond portfolio theory. Positive definite quadratic forms play a central role in optimal control through
Lyapunov functions and linear-quadratic regulators
\cite{Boyd1994},\cite{Khalil2002}, in machine learning through Mahalanobis
distance metrics and metric learning \cite{Weinberger2009}, and in
physics and medicine through quadratic energy functionals and the
linear-quadratic model of radiation response
\cite{Fowler1989}, \cite{McMahon2019}. Therefore, although motivated by
portfolio optimization, the analytical results developed in this paper
may also be relevant in a broader optimization context.

\section{Main results}\label{sec:results}

In this section, we present the main results, showing that the main results of Markowitz portfolio selection theory remain available in explicit closed form under the proposed quadratic risk measure. Proofs are given in the later section.

\begin{proposition}[Minimum quadratic-risk portfolio]\label{P1}
Let $P={\pmb R}{\pmb w}^T$, where ${\pmb w}\in\mathbb{R}^n$ and
${\pmb R}=(r_1,r_2,\,\ldots,\,r_n)$ is the random return vector. The portfolio minimizing the generalized quadratic risk measure $R_P$, subject to the budget constraint ${\pmb 1}{\pmb w}^T=1$, is given by
\begin{align}\label{P1_Min_VaR_Portf}
{\pmb w}^T
=
\frac{1+{\pmb 1}Q^{-1}{\pmb c}^T}
{{\pmb 1}Q^{-1}{\pmb 1}^T}
Q^{-1}{\pmb 1}^T
-
Q^{-1}{\pmb c}^T.
\end{align}
The corresponding minimum value is
\begin{align}\label{P1:minimum}
\min_{{\pmb w}\in\mathcal P}R_P({\pmb w})
=
\frac{
\left(1+{\pmb 1}Q^{-1}{\pmb c}^T\right)^2
-
\left({\pmb 1}Q^{-1}{\pmb 1}^T\right)
\left({\pmb c}Q^{-1}{\pmb c}^T\right)
}
{
2\left({\pmb 1}Q^{-1}{\pmb 1}^T\right)
}
+k.
\end{align}
\end{proposition}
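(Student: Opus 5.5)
This is an equality-constrained, strictly convex quadratic program, so I would solve it with a single Lagrange multiplier. Define
\begin{align*}
L({\pmb w},\lambda)=\tfrac12{\pmb w}Q{\pmb w}^T+{\pmb c}{\pmb w}^T+k-\lambda\left({\pmb 1}{\pmb w}^T-1\right).
\end{align*}
Since $\nabla^2R_P=Q$ is positive definite, as noted in the subsection on basic properties, $R_P$ is strictly convex on the affine set $\mathcal P$. Hence any stationary point of $L$ is the unique global minimizer, and first-order conditions suffice.

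The stationarity condition $Q{\pmb w}^T+{\pmb c}^T-\lambda{\pmb 1}^T={\pmb 0}$ gives ${\pmb w}^T=\lambda Q^{-1}{\pmb 1}^T-Q^{-1}{\pmb c}^T$. Substituting this into ${\pmb 1}{\pmb w}^T=1$ determines the multiplier,
\begin{align*}
\lambda=\frac{1+{\pmb 1}Q^{-1}{\pmb c}^T}{{\pmb 1}Q^{-1}{\pmb 1}^T},
\end{align*}
which is well defined because ${\pmb 1}Q^{-1}{\pmb 1}^T>0$ ($Q^{-1}$ is positive definite). This yields \eqref{P1_Min_VaR_Portf} directly.

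For the minimum value I would avoid expanding $R_P$ term by term and instead use the stationarity relation. Writing $Q{\pmb w}^T=\lambda{\pmb 1}^T-{\pmb c}^T$ gives $\tfrac12{\pmb w}Q{\pmb w}^T=\tfrac12(\lambda-{\pmb w}{\pmb c}^T)$, using ${\pmb w}{\pmb 1}^T=1$. Therefore
\begin{align*}
R_P=\tfrac12\lambda+\tfrac12{\pmb c}{\pmb w}^T+k .
\end{align*}
Next compute ${\pmb c}{\pmb w}^T=\lambda\,{\pmb c}Q^{-1}{\pmb 1}^T-{\pmb c}Q^{-1}{\pmb c}^T$, where symmetry of $Q^{-1}$ gives ${\pmb c}Q^{-1}{\pmb 1}^T={\pmb 1}Q^{-1}{\pmb c}^T$. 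This leads to
\begin{align*}
R_P=\tfrac12\lambda\left(1+{\pmb 1}Q^{-1}{\pmb c}^T\right)-\tfrac12{\pmb c}Q^{-1}{\pmb c}^T+k .
\end{align*}
Inserting $\lambda$ and placing everything over the common denominator $2\,{\pmb 1}Q^{-1}{\pmb 1}^T$ gives \eqref{P1:minimum}.

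The only delicate points are bookkeeping: the row/column conventions in the paper's notation and the use of symmetry of $Q^{-1}$. Justifying global optimality is immediate from strict convexity. As a cross-check, the minimum value is nonnegative by Cauchy--Schwarz in the $Q^{-1}$ inner product together with \eqref{k_def}.
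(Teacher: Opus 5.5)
Your proposal is correct and follows the paper's proof step for step: the same Lagrangian, the same stationarity system, the same multiplier $\lambda=(1+{\pmb 1}Q^{-1}{\pmb c}^T)/({\pmb 1}Q^{-1}{\pmb 1}^T)$, and the same appeal to strict convexity for global optimality. The only cosmetic difference is how the minimum value is evaluated: the paper substitutes into the completed-square form \eqref{risk_non_neg}, where at the minimizer ${\pmb w}+{\pmb c}Q^{-1}=\lambda{\pmb 1}Q^{-1}$ so the quadratic part equals $\lambda^2\,{\pmb 1}Q^{-1}{\pmb 1}^T$, while you reduce ${\pmb w}Q{\pmb w}^T$ via the stationarity relation, and both give the same expression immediately.
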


\begin{proposition}[Maximum Sharpe ratio portfolio]\label{P2}
Let $P={\pmb R}{\pmb w}^T$, where ${\pmb w}\in\mathbb{R}^n$ and
${\pmb R}=(r_1,r_2,\,\ldots,\,r_n)$ is the random return vector. Suppose ${\pmb c}Q^{-1}{\pmb c}^T<2k$, which guarantees $R_P({\pmb w})>0$ for all ${\pmb w}\in\mathcal{P}$. The portfolio ${\pmb w}$ maximizing the Sharpe ratio
\begin{align*}
S_P({\pmb w})
=
\frac{{\pmb w}\tilde{\pmb\mu}^T}
{\sqrt{R_P({\pmb w})}}
=
\frac{{\pmb w}{\pmb\mu}^T-r_f}
{\sqrt{\frac12{\pmb w}Q{\pmb w}^T+{\pmb c}{\pmb w}^T+k}},
\end{align*}
subject to the budget constraint ${\pmb 1}{\pmb w}^T=1$, is given by
\begin{align}\label{P2:w}
{\pmb w}^T
=
Q^{-1}
\left(
\frac{A(E-2k)-(C+1)^2}
{AD-B(C+1)}
\,\tilde{\pmb\mu}^T
+
\frac{D(C+1)-B(E-2k)}
{AD-B(C+1)}
\,{\pmb1}^T
-
{\pmb c}^T
\right),
\end{align}
provided that $AD-B(C+1)\neq0$, which implies $\left(1+{\pmb c}Q^{-1}{\pmb 1}^T\right){\pmb \mu}^T \neq \left(r_f + {\pmb c}Q^{-1}{\pmb \mu}^T\right){\pmb 1}^T$,
where
\begin{align*}
A&={\pmb1}Q^{-1}{\pmb1}^T,
&
B&={\pmb1}Q^{-1}\tilde{\pmb\mu}^T,
\\
C&={\pmb c}Q^{-1}{\pmb1}^T,
&
D&={\pmb c}Q^{-1}\tilde{\pmb\mu}^T,
&
E&={\pmb c}Q^{-1}{\pmb c}^T.
\end{align*}

The corresponding maximum value is
\begin{align}\label{P2:maxSR}
\max_{{\pmb w}\in\mathcal P}S_P({\pmb w})
=
\operatorname{sgn}\!\left(B(1+C)-AD\right)
\sqrt{\frac{2\Omega}{\Delta}},
\end{align}
where
\begin{align*}
\Delta&=(1+C)^2-A(E-2k)>0,\\
\Omega&=(2k-E)(AF-B^2)
      +F(1+C)^2
      -2BD(1+C)
      +AD^2>0,\\
F&=\tilde{\pmb\mu}Q^{-1}\tilde{\pmb\mu}^T.
\end{align*}
\end{proposition}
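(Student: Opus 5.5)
The plan is to use the budget constraint to make the numerator linear: write ${\pmb w}\tilde{\pmb\mu}^T$ with $\tilde{\pmb\mu}={\pmb\mu}-r_f{\pmb 1}$, so that ${\pmb w}\tilde{\pmb\mu}^T={\pmb w}{\pmb\mu}^T-r_f$ on $\mathcal P$. Since $R_P>0$ on $\mathcal P$ when $E<2k$, the ratio $S_P$ is smooth on the affine hyperplane $\mathcal P$. I would then find critical points of $S_P$ with a Lagrange multiplier for ${\pmb 1}{\pmb w}^T=1$. Equivalently, I would look at the level-set geometry: the maximum of $N/\sqrt{R}$ is attained where $\nabla N$ is proportional to $\nabla R$ modulo ${\pmb 1}$.

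The first-order conditions give
\begin{align*}
\tilde{\pmb\mu}^T R_P({\pmb w})-\tfrac12({\pmb w}\tilde{\pmb\mu}^T)\bigl(Q{\pmb w}^T+{\pmb c}^T\bigr)=\lambda{\pmb 1}^T .
\end{align*}
Solving for ${\pmb w}$ yields the ansatz ${\pmb w}^T=Q^{-1}(\alpha\tilde{\pmb\mu}^T+\beta{\pmb 1}^T-{\pmb c}^T)$ for scalars $\alpha,\beta$, which is exactly the shape of \eqref{P2:w}. Substituting this ansatz reduces the problem to two unknowns. The budget constraint gives $\alpha B+\beta A-C=1$. Next I would compute $N={\pmb w}\tilde{\pmb\mu}^T=\alpha F+\beta B-D$ and
\begin{align*}
R=\tfrac12\bigl(\alpha^2F+2\alpha\beta B+\beta^2A-E\bigr)+k ,
\end{align*}
where the cross terms with ${\pmb c}$ collapse because ${\pmb w}Q+{\pmb c}=\alpha\tilde{\pmb\mu}+\beta{\pmb 1}$. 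The remaining scalar condition comes from proportionality, $\alpha=R/(N/2)\cdot(\text{const})$. Concretely, matching the $\tilde{\pmb\mu}$ coefficients gives $\alpha N=2R$, that is, $\alpha N/2=R$. Eliminating $\beta$ through the budget equation, this becomes an equation in $\alpha$. I expect the quadratic terms to cancel, leaving a linear equation with solution $\alpha=\frac{A(E-2k)-(C+1)^2}{AD-B(C+1)}$. Then $\beta=(1+C-\alpha B)/A$, which should simplify to the stated coefficient. This elimination is the main computational obstacle. I would organize it by writing $R$ and $N$ in terms of $\alpha$ alone, with coefficients built from $\Delta$ and $AF-B^2$, and checking the cancellation.

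For the value, I would substitute to get $S=N/\sqrt R$ with $R=\alpha N/2$, so $S^2=2N/\alpha$. Computing $N$ in terms of $\alpha$ should give $S^2=2\Omega/\Delta$, with the sign of $S$ equal to the sign of $N$, hence the factor $\operatorname{sgn}(B(1+C)-AD)$. Positivity follows from two facts. First, $\Delta=(1+C)^2-A(E-2k)>0$ because $A>0$ and $E<2k$. Second, $\Omega>0$ holds because it can be written as a positive-definite Gram-type expression: by Cauchy--Schwarz in the $Q^{-1}$ inner product, $AF-B^2\ge0$, and the remaining terms form $\|\cdot\|^2$ of a combination. A cleaner route to $\Omega/\Delta$ is to note that $\Omega/\Delta$ equals $\max(N^2/(2R))$.

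To confirm that this critical point is the maximum, I would argue through a dual quadratic problem. For fixed $N=t$, minimizing $R$ over $\mathcal P$ gives a convex quadratic $R_{\min}(t)=at^2+bt+c_0$ with $a>0$ and positive discriminant-related minimum, by the efficient frontier analysis as in Proposition \ref{P1}. Then $\sup t/\sqrt{R_{\min}(t)}$ is a one-variable problem, with a unique maximizer when $R_{\min}>0$. This also explains why the nondegeneracy condition $AD-B(C+1)\neq0$ (the numerator of $N$) is needed: it ensures that $N\neq0$ at the optimum, so the optimum is not the degenerate case $\tilde{\pmb\mu}$ parallel to the relevant direction.
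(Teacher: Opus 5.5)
Your construction of the candidate is essentially the paper's. You use the same ansatz ${\pmb w}^T=Q^{-1}(\alpha\tilde{\pmb\mu}^T+\beta{\pmb 1}^T-{\pmb c}^T)$ and the same budget equation $B\alpha+A\beta=1+C$. You get the second linear equation from the self-consistency $\alpha N=2R$: the $\alpha^2F$ terms cancel and $\alpha\beta B+\beta^2A=\beta(1+C)$, leaving $D\alpha+(1+C)\beta=E-2k$. The paper obtains the same row by left-multiplying the first-order condition by ${\pmb w}$ and then by ${\pmb c}$.

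Your evaluation $S_P^2=N^2/R=2N/\alpha$ with $\alpha=-\Delta/(AD-B(C+1))$ is lighter than the paper's separate reduction of numerator and denominator. Indeed, $R=\alpha N/2>0$ gives $\operatorname{sgn}S_P=\operatorname{sgn}\alpha=\operatorname{sgn}(B(1+C)-AD)$ at once. After computing $N=-\Omega/(AD-B(C+1))$ you also get $\Omega=\Delta S_P^2/2>0$ with no separate positivity argument.

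For global optimality the paper perturbs ${\pmb w}_{\max}+{\pmb\xi}$, bounds $({\pmb\xi}\tilde{\pmb\mu}^T)^2/({\pmb\xi}Q{\pmb\xi}^T)\le F-B^2/A$ by Cauchy--Schwarz, and uses $A\Omega-(AF-B^2)\Delta=(AD-B(C+1))^2$. Your reduction to the frontier $R_{\min}(t)$ and a one-variable problem is a genuinely different route. Its advantage is that it makes the role of the sign visible.

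That is also where your proposal breaks: the claim that $t\mapsto t/\sqrt{R_{\min}(t)}$ has a unique maximizer whenever $R_{\min}>0$ is false. With your $R_{\min}(t)=at^2+bt+c_0$, $a>0$, $c_0=R_{\min}(0)>0$, the derivative is $(bt+2c_0)/\bigl(2R_{\min}(t)^{3/2}\bigr)$.
\begin{itemize}
\item If $b<0$, a maximizer exists, namely $t^*=-2c_0/b>0$.
\item If $b\ge0$, the function increases on $[0,\infty)$ toward the unattained supremum $1/\sqrt a=\sqrt{2(F-B^2/A)}$.
\item If $b>0$, its only critical point $t^*<0$ is the global \emph{minimum}.
\end{itemize}
The vertex of $R_{\min}$ is the excess return $(B(1+C)-AD)/A$ of the portfolio of Proposition~\ref{P1}, so $b$ has the sign of $AD-B(C+1)$.

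Hence your argument, done correctly, proves the proposition only when $B(1+C)>AD$, i.e.\ $\mu_{R\min}>r_f$. In the $\operatorname{sgn}=-1$ branch, \eqref{P2:w} is the minimizer of $S_P$ and no maximizer exists. For example, take $n=2$, $Q=I$, ${\pmb c}={\pmb 0}$, $k=1$, $\tilde{\pmb\mu}=(-1,-3)$. The formula gives ${\pmb w}=(-0.75,\,1.75)$ with $S_P=-\sqrt{36/5}$, while $S_P\to2$ along $(w_1,1-w_1)$ as $w_1\to\infty$.

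The paper's argument carries the same hidden restriction: writing $\frac{x}{\sqrt y}\sqrt{y+\alpha\varepsilon+\delta}=\sqrt{\frac{x^2}{y}(y+\alpha\varepsilon+\delta)}$ requires $x>0$. You should add $B(1+C)>AD$ as a hypothesis.

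Three smaller points:
\begin{itemize}
\item The frontier reduction needs $\tilde{\pmb\mu}$ and ${\pmb 1}$ non-collinear, which $AD-B(C+1)\ne0$ does not exclude ($\tilde{\pmb\mu}=m{\pmb 1}$ gives $AD-B(C+1)=-mA$). Handle that case via Proposition~\ref{P1}.
\item To identify the one-variable maximizer with \eqref{P2:w}, note that a global maximizer on $\mathcal P$ is a critical point, and your nonsingular $2\times2$ system makes the critical point unique.
\item $AD-B(C+1)$ is the denominator of $N$, not its numerator. Its vanishing is exactly the case $b=0$, where there is no critical point at all.
\end{itemize}
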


\begin{proposition}[Minimum quadratic-risk portfolio with prescribed expected return]\label{P3}
Let $P={\pmb R}{\pmb w}^T$, where ${\pmb w}\in\mathbb{R}^n$,
${\pmb R}=(r_1,r_2,\,\ldots,\,r_n)$ is the random return vector, and
${\pmb \mu}\neq{\pmb 0}$. Then the generalized quadratic risk measure
\begin{align*}
R_P({\pmb w})
=
\frac12{\pmb w}Q{\pmb w}^T
+
{\pmb c}{\pmb w}^T
+
k,
\end{align*}
subject to the constraints
$
{\pmb \mu}{\pmb w}^T=\mu_0
$,
$
{\pmb 1}{\pmb w}^T=1,
$
attains its minimum at
\begin{align}\label{P3:w}
{\pmb w}^T
=
\left(
Q^{-1}{\pmb \mu}^T,\,
Q^{-1}{\pmb 1}^T
\right)
\begin{pmatrix}
{\pmb \mu}Q^{-1}{\pmb \mu}^T &
{\pmb \mu}Q^{-1}{\pmb 1}^T
\\
{\pmb \mu}Q^{-1}{\pmb 1}^T &
{\pmb 1}Q^{-1}{\pmb 1}^T
\end{pmatrix}^{-1}
\begin{pmatrix}
\mu_0+{\pmb \mu}Q^{-1}{\pmb c}^T
\\
1+{\pmb 1}Q^{-1}{\pmb c}^T
\end{pmatrix}
-
Q^{-1}{\pmb c}^T.
\end{align}
\end{proposition}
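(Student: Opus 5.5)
The proof uses Lagrange multipliers for an equality-constrained, strictly convex quadratic program. Since $\nabla^2R_P=Q$ is positive definite and the constraints are affine, any stationary point of the Lagrangian is the unique global minimizer on the affine set $\{{\pmb w}:{\pmb\mu}{\pmb w}^T=\mu_0,\ {\pmb 1}{\pmb w}^T=1\}$. It therefore suffices to solve the first-order conditions.

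\textbf{Step 1: stationarity.} Set
\begin{align*}
L({\pmb w},\lambda_1,\lambda_2)=\tfrac12{\pmb w}Q{\pmb w}^T+{\pmb c}{\pmb w}^T+k-\lambda_1({\pmb\mu}{\pmb w}^T-\mu_0)-\lambda_2({\pmb 1}{\pmb w}^T-1).
\end{align*}
Setting $\nabla_{\pmb w}L=0$ gives $Q{\pmb w}^T+{\pmb c}^T=\lambda_1{\pmb\mu}^T+\lambda_2{\pmb 1}^T$. Hence
\begin{align*}
{\pmb w}^T=\lambda_1Q^{-1}{\pmb\mu}^T+\lambda_2Q^{-1}{\pmb 1}^T-Q^{-1}{\pmb c}^T=\left(Q^{-1}{\pmb\mu}^T,\,Q^{-1}{\pmb 1}^T\right)\begin{pmatrix}\lambda_1\\ \lambda_2\end{pmatrix}-Q^{-1}{\pmb c}^T.
\end{align*}

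\textbf{Step 2: the multipliers.} Multiply this expression on the left by ${\pmb\mu}$ and by ${\pmb 1}$, and impose the two constraints. Using the symmetry of $Q^{-1}$, so that ${\pmb 1}Q^{-1}{\pmb\mu}^T={\pmb\mu}Q^{-1}{\pmb 1}^T$, this yields the linear system $M(\lambda_1,\lambda_2)^T=(\mu_0+{\pmb\mu}Q^{-1}{\pmb c}^T,\ 1+{\pmb 1}Q^{-1}{\pmb c}^T)^T$. Here $M$ is exactly the $2\times2$ Gram-type matrix appearing in \eqref{P3:w}. Inverting $M$ and substituting back into Step 1 gives \eqref{P3:w}.

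\textbf{Step 3: invertibility of $M$ (the main obstacle).} The matrix $M$ is the Gram matrix of ${\pmb\mu}$ and ${\pmb 1}$ in the inner product $\langle{\pmb x},{\pmb y}\rangle={\pmb x}Q^{-1}{\pmb y}^T$. By the Cauchy--Schwarz inequality, $\det M\ge0$, with equality iff ${\pmb\mu}$ and ${\pmb 1}$ are linearly dependent.

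The hypothesis ${\pmb\mu}\neq{\pmb 0}$ alone does not exclude ${\pmb\mu}=m{\pmb 1}$ with $m\neq0$. In that case the two constraints either coincide (when $\mu_0=m$), and the answer is the portfolio of Proposition~\ref{P1}, or they are inconsistent. So I would state explicitly that ${\pmb\mu}$ is not a multiple of ${\pmb 1}$. This is implicit in the statement, since the inverse in \eqref{P3:w} is assumed to exist. Under this condition $\det M>0$, the system has a unique solution, and the two constraints are jointly feasible.

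\textbf{Step 4: minimality.} For any feasible ${\pmb w}+{\pmb h}$, feasibility gives ${\pmb\mu}{\pmb h}^T={\pmb 1}{\pmb h}^T=0$, and stationarity then gives
\begin{align*}
R_P({\pmb w}+{\pmb h})-R_P({\pmb w})=(\lambda_1{\pmb\mu}+\lambda_2{\pmb 1}){\pmb h}^T+\tfrac12{\pmb h}Q{\pmb h}^T=\tfrac12{\pmb h}Q{\pmb h}^T>0
\end{align*}
for ${\pmb h}\neq{\pmb 0}$. This confirms that the stationary point is the unique minimizer.
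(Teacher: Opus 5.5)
Your proposal is correct and follows the paper's proof essentially step for step. It uses the same Lagrangian, solves the stationarity condition for ${\pmb w}^T$ in terms of $\lambda_1,\lambda_2$, obtains the $2\times 2$ Gram system in $M$ made invertible by Cauchy--Schwarz, and concludes minimality by strict convexity. Two of your points go beyond the paper: you rightly note that ${\pmb \mu}\neq{\pmb 0}$ is not enough and that ${\pmb \mu}$ and ${\pmb 1}$ must be non-collinear (the paper uses this in its proof without including it in the hypothesis), and your identity $R_P({\pmb w}+{\pmb h})-R_P({\pmb w})=\frac12{\pmb h}Q{\pmb h}^T$ spells out the minimality argument that the paper only cites from Proposition~\ref{P1}.
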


Analogously to the classical mean--variance framework, the efficient frontier is defined as the graph of the minimum achievable risk for each predefined expected return. Proposition~\ref{P3} yields its explicit equation. Let $\mu_{R\min}$ denote the expected return of the minimum-risk portfolio \eqref{P1_Min_VaR_Portf}, i.e.
\begin{align*}
\mu_{R\min}
=
{\pmb \mu}\left(
\frac{1+{\pmb 1}Q^{-1}{\pmb c}^T}
{{\pmb 1}Q^{-1}{\pmb 1}^T}
Q^{-1}{\pmb 1}^T
-
Q^{-1}{\pmb c}^T
\right).
\end{align*}

\begin{proposition}[Efficient frontier of risky assets]\label{P4}
Let $P={\pmb R}{\pmb w}^T$, where ${\pmb w}\in\mathbb{R}^n$ and
${\pmb R}=(r_1,r_2,\,\ldots,\,r_n)$ is the random return vector. Then the efficient frontier of risky assets is given by
\begin{align}\label{P4:eff:frontier}
R_P
&=
\frac{a}{2d}\mu^2
+
\frac{ae-b(1+f)}{d}\mu
+
\frac{ae^2-2be(1+f)+c(1+f)^2}{2d}
-\frac{g}{2}
+k,
\,
\mu\geqslant\mu_{R\min},
\end{align}
where $\mu_{R\,min}=-e+b(1+f)/a$,
\begin{align*}
a&={\pmb1}Q^{-1}{\pmb1}^T,
&
b&={\pmb1}Q^{-1}{\pmb\mu}^T,
&
c&={\pmb\mu}Q^{-1}{\pmb\mu}^T,
\\
e&={\pmb\mu}Q^{-1}{\pmb c}^T,
&
f&={\pmb1}Q^{-1}{\pmb c}^T,
&
g&={\pmb c}Q^{-1}{\pmb c}^T,
\end{align*}
and
\begin{align*}
d
=
({\pmb\mu}Q^{-1}{\pmb\mu}^T)
({\pmb1}Q^{-1}{\pmb1}^T)
-
({\pmb\mu}Q^{-1}{\pmb1}^T)^2
>0,
\end{align*}
provided that the vectors ${\pmb\mu}$ and ${\pmb 1}$ are non-colinear.
\end{proposition}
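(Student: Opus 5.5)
The plan is to substitute the minimizer \eqref{P3:w} of Proposition~\ref{P3} into $R_P$ and simplify, using the scalars $a,b,c,e,f,g,d$ of the statement.

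\textbf{Step 1: write the minimizer compactly.} Set $\mu_0=\mu$. The $2\times2$ Gram matrix in \eqref{P3:w} is $M=\begin{pmatrix} c & b\\ b & a\end{pmatrix}$. Its determinant is $d=ca-b^2$. By Cauchy--Schwarz in the inner product induced by $Q^{-1}$, $d>0$ precisely when ${\pmb\mu}$ and ${\pmb 1}$ are non-colinear, so $M$ is invertible. Writing $U=(Q^{-1}{\pmb\mu}^T,\,Q^{-1}{\pmb 1}^T)$, $\lambda=M^{-1}(\mu+e,\,1+f)^T$ and $y=(\mu+e,\,1+f)^T$, the minimizer is ${\pmb w}^T=U\lambda-Q^{-1}{\pmb c}^T$.

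\textbf{Step 2: evaluate the risk via the completed square.} By \eqref{risk_non_neg},
\begin{align*}
R_P=\tfrac12({\pmb w}+{\pmb c}Q^{-1})Q({\pmb w}+{\pmb c}Q^{-1})^T-\tfrac{g}{2}+k .
\end{align*}
Here ${\pmb w}^T+Q^{-1}{\pmb c}^T=U\lambda$. Since $U^TQU=M$, the quadratic part equals $\tfrac12\lambda^TM\lambda=\tfrac12 y^TM^{-1}y$. Hence
\begin{align*}
R_P=\frac{a(\mu+e)^2-2b(\mu+e)(1+f)+c(1+f)^2}{2d}-\frac g2+k .
\end{align*}
Expanding in powers of $\mu$ gives exactly the coefficients $\frac{a}{2d}$, $\frac{ae-b(1+f)}{d}$ and the constant term in \eqref{P4:eff:frontier}.

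\textbf{Step 3: the efficient branch.} The expression is a convex parabola in $\mu$ because $a/d>0$. Its vertex satisfies $a(\mu+e)=b(1+f)$, i.e. $\mu=-e+b(1+f)/a$. This vertex must coincide with $\mu_{R\min}$, the expected return of the portfolio \eqref{P1_Min_VaR_Portf}. Computing ${\pmb\mu}$ times \eqref{P1_Min_VaR_Portf} gives $\frac{1+f}{a}b-e$, so the two agree, consistent with Proposition~\ref{P1}.

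For $\mu<\mu_{R\min}$ the same risk level is attained at a higher return, so those portfolios are dominated. The efficient frontier is therefore the branch $\mu\geqslant\mu_{R\min}$.

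The only step needing care is the identity $U^TQU=M$ together with the reduction to $y^TM^{-1}y$. The rest is routine expansion of the $2\times 2$ inverse $M^{-1}=\frac1d\begin{pmatrix} a&-b\\-b&c\end{pmatrix}$.
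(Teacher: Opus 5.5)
Your proposal is correct and is essentially the paper's proof: substitute the minimizer of Proposition~\ref{P3}, reduce $R_P$ to $\frac{a\gamma^2-2b\gamma\kappa+c\kappa^2}{2d}-\frac g2+k$ with $\gamma=\mu+e$, $\kappa=1+f$, and expand in $\mu$. The differences are only in presentation. Your completed-square form \eqref{risk_non_neg} together with $U^TQU=M$ reaches that expression without the paper's separate expansion of ${\pmb w}Q{\pmb w}^T$ and ${\pmb c}{\pmb w}^T$ and cancellation of cross terms, and your vertex computation and dominance argument justify the restriction $\mu\geqslant\mu_{R\min}$, which the paper leaves implicit.
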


\begin{cor}\label{C}
Identity \eqref{P4:eff:frontier} can be rewritten as
\begin{align*}
\mu
=
\mu_{R\min}
+
\sqrt{\frac{2d}{a}}
\sqrt{R_P-R_{P \min}},
\qquad
R_P\geqslant R_{P \min},
\end{align*}
where
\begin{align*}
\mu_{R\min}=
-e
+
\frac{b(1+f)}{a},\qquad
R_{P \min}
=
\frac{
(1+f)^2-ag
}
{2a}+k
\end{align*}
is the return and risk of the global minimum risk portfolio \eqref{P1_Min_VaR_Portf}.
\end{cor}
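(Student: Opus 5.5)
The plan is to complete the square in $\mu$ in \eqref{P4:eff:frontier} and then solve the resulting quadratic relation for $\mu$ on the efficient branch $\mu\geqslant\mu_{R\min}$.

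\emph{Step 1: the coefficient of $\mu^2$ and the vertex.} The leading coefficient is $a/(2d)$, where $a={\pmb 1}Q^{-1}{\pmb 1}^T>0$ because $Q^{-1}$ is positive definite, and $d>0$ by Proposition~\ref{P4}. The parabola opens upward in $\mu$, and its vertex sits at
\begin{align*}
\mu^*=-\frac{ae-b(1+f)}{d}\cdot\frac{d}{a}=-e+\frac{b(1+f)}{a},
\end{align*}
which is exactly $\mu_{R\min}$. Writing $\mu=\mu_{R\min}+t$, the right-hand side of \eqref{P4:eff:frontier} becomes $\frac{a}{2d}t^2+R^*$, where $R^*$ is the value at the vertex.

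\emph{Step 2: identify $R^*$ with $R_{P\min}$.} This is the only step with real algebra, and I expect it to be the main obstacle. Evaluating \eqref{P4:eff:frontier} at the vertex gives
\begin{align*}
R^*=\frac{ae^2-2be(1+f)+c(1+f)^2}{2d}-\frac{(ae-b(1+f))^2}{2ad}-\frac g2+k.
\end{align*}
Putting everything over $2ad$, the $a^2e^2$ and cross terms $-2abe(1+f)$ cancel. What remains is
\begin{align*}
R^*=\frac{(1+f)^2(ac-b^2)}{2ad}-\frac g2+k=\frac{(1+f)^2}{2a}-\frac g2+k,
\end{align*}
since $d=ac-b^2$. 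This equals the stated $R_{P\min}=\frac{(1+f)^2-ag}{2a}+k$. It also agrees with \eqref{P1:minimum} of Proposition~\ref{P1}, which confirms that it is the risk of the global minimum risk portfolio \eqref{P1_Min_VaR_Portf}. The expected return of that portfolio is ${\pmb\mu}\bigl(\frac{1+f}{a}Q^{-1}{\pmb 1}^T-Q^{-1}{\pmb c}^T\bigr)=\frac{b(1+f)}{a}-e$, which matches $\mu_{R\min}$.

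\emph{Step 3: invert.} From Step 1 we have $R_P-R_{P\min}=\frac{a}{2d}(\mu-\mu_{R\min})^2$. Hence $R_P\geqslant R_{P\min}$, and $|\mu-\mu_{R\min}|=\sqrt{2d/a}\,\sqrt{R_P-R_{P\min}}$. The frontier in Proposition~\ref{P4} is restricted to $\mu\geqslant\mu_{R\min}$, so we take the nonnegative root. This yields the stated formula for $\mu$. Conversely, as $\mu$ ranges over $[\mu_{R\min},\infty)$, $R_P$ ranges over $[R_{P\min},\infty)$, so the two descriptions define the same curve.
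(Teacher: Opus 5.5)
Your proposal is correct and follows essentially the same route as the paper: complete the square in $\mu$, use $d=ac-b^2$ to reduce the vertex value to $\frac{(1+f)^2-ag}{2a}+k$, and take the nonnegative root on the branch $\mu\geqslant\mu_{R\min}$. The paper only asserts that these are the return and risk of the minimum-risk portfolio; your explicit check against Proposition~\ref{P1} and your computation of that portfolio's expected return fill in that step.
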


The condition $\mu\geqslant \mu_{R \min}$ in \eqref{P4:eff:frontier} gives the upper branch of the hyperbola on the risk-return plane. If $\mu\in\mathbb{R}$ there, then the provided hyperbola is called the minimum risk frontier.

Suppose that a risk-free asset with return $r_f$ is available. A portfolio is then determined by the weights
$
{\pmb w}_F=({\pmb w},\,w_f)\in\mathbb R^{n+1},
$
where
$
w_f=1-{\pmb1}{\pmb w}^T
$
denotes the proportion invested in the risk-free asset. We extend the generalized quadratic risk measure for such portfolios by
\begin{align}\label{grm}
R_P({\pmb w}_F)
=
\frac12{\pmb w}Q{\pmb w}^T
+
{\pmb c}{\pmb w}^T
+
c_fw_f
+
k,
\end{align}
where $c_f\in\mathbb R$ represents the linear contribution of the risk-free asset. Since
$w_f=1-{\pmb1}{\pmb w}^T$,
it follows that
\begin{align*}
R_P({\pmb w}_F)
=\frac12{\pmb w}Q{\pmb w}^T
+\tilde{\pmb c}{\pmb w}^T
+\tilde{k},
\end{align*}
where $\tilde{\pmb c}:={\pmb c}-c_f{\pmb1}$ and $\tilde k:=k+c_f$.

Furthermore, the expected excess return of the portfolio with a risk-free investment is
\begin{align*}
{\pmb w}{\pmb \mu}^T+w_fr_f-r_f
=
{\pmb w}(\pmb\mu-r_f{\pmb1})^T
=
{\pmb w}\tilde{\pmb\mu}^T.
\end{align*}

Therefore, minimizing the generalized quadratic risk measure with risk-free investment with a prescribed excess return is equivalent to solving
\begin{align}\label{equivalent}
\min_{{\pmb w}\in\mathbb R^n}
\quad
\frac12{\pmb w}Q{\pmb w}^T
+
\tilde{\pmb c}{\pmb w}^T
+
\tilde k\quad
\text{subject to}\quad
{\pmb w}\tilde{\pmb\mu}^T=\tilde\mu_0.
\end{align}

\begin{proposition}[Efficient frontier with the risk-free asset]\label{P5}
Let $P=w_1r_1+\cdots+w_nr_n+w_fr_f$, where
$w_f=1-{\pmb1}{\pmb w}^T$.
Then the efficient frontier with the risk-free asset is given by
\begin{align}\label{cmc}
\mu_P
=
r_f
+
\sqrt{
\left(
2R_P({\pmb w}_F)
+
\tilde{\pmb c}Q^{-1}\tilde{\pmb c}^{\,T}
-
2\tilde{k}
\right)
\left(
\tilde{\pmb\mu}Q^{-1}\tilde{\pmb\mu}^{\,T}
\right)
}
-
\tilde{\pmb\mu}Q^{-1}\tilde{\pmb c}^{\,T},
\end{align}
provided that
\begin{align*}
R_P({\pmb w}_F)
\geqslant
\tilde{k}
-
\frac12
\tilde{\pmb c}Q^{-1}\tilde{\pmb c}^{\,T}.
\end{align*}
\end{proposition}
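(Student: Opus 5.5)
We reduce to the single-constraint problem \eqref{equivalent}. For each $\tilde\mu_0=\mu_P-r_f$ we minimize a strictly convex quadratic under one linear equality, then eliminate the multiplier to relate the minimal risk to $\tilde\mu_0$. Finally we invert this relation and keep the upper branch.

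\textbf{Step 1: solve the constrained problem.} Form the Lagrangian
\[
L({\pmb w},\lambda)=\tfrac12{\pmb w}Q{\pmb w}^T+\tilde{\pmb c}{\pmb w}^T+\tilde k-\lambda({\pmb w}\tilde{\pmb\mu}^T-\tilde\mu_0).
\]
Stationarity gives ${\pmb w}^T=Q^{-1}(\lambda\tilde{\pmb\mu}^T-\tilde{\pmb c}^T)$. The constraint then reads $\lambda F-G=\tilde\mu_0$, where $F=\tilde{\pmb\mu}Q^{-1}\tilde{\pmb\mu}^T$ and $G=\tilde{\pmb\mu}Q^{-1}\tilde{\pmb c}^T$. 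Hence $\lambda=(\tilde\mu_0+G)/F$. Strict convexity ($\nabla^2=Q\succ0$) makes this stationary point the unique minimizer. We need $\tilde{\pmb\mu}\neq{\pmb 0}$ so that $F>0$; we will state this as the implicit nondegeneracy assumption, in the spirit of Proposition~\ref{P3}.

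\textbf{Step 2: compute the minimal risk.} Substitute the minimizer, using the completed-square form \eqref{risk_non_neg} with $\tilde{\pmb c},\tilde k$:
\[
R_P=\tfrac12({\pmb w}+\tilde{\pmb c}Q^{-1})Q({\pmb w}+\tilde{\pmb c}Q^{-1})^T-\tfrac12 H+\tilde k, \qquad H=\tilde{\pmb c}Q^{-1}\tilde{\pmb c}^T.
\]
Since ${\pmb w}+\tilde{\pmb c}Q^{-1}=\lambda\tilde{\pmb\mu}Q^{-1}$, the first term equals $\tfrac12\lambda^2F=\frac{(\tilde\mu_0+G)^2}{2F}$. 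Therefore
\[
2R_P+H-2\tilde k=\frac{(\tilde\mu_0+G)^2}{F}.
\]

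\textbf{Step 3: invert and select the efficient branch.} The left-hand side is nonnegative exactly under the stated proviso $R_P\geqslant\tilde k-\tfrac12 H$. Taking square roots gives
\[
\tilde\mu_0+G=\pm\sqrt{(2R_P+H-2\tilde k)F}.
\]
The efficient frontier is the set of maximal return for a given risk, equivalently the upper branch with the $+$ sign. The risk is minimal, $\tilde k-\tfrac12 H$, at $\tilde\mu_0=-G$, and it increases with $|\tilde\mu_0+G|$. With $\tilde\mu_0=\mu_P-r_f$ this yields \eqref{cmc}.

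The main obstacle is the justification of the branch choice, i.e.\ that efficiency means the larger root, together with the handling of the degenerate case $\tilde{\pmb\mu}={\pmb 0}$. Everything else is a one-multiplier computation; this part should be short but needs a sentence defining efficiency as maximal return at fixed risk.
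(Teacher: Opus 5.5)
Your proposal is correct and follows the paper's proof essentially step for step: the single-multiplier Lagrangian for \eqref{equivalent}, uniqueness via strict convexity, substitution of the minimizer to get $2R_P+\tilde{\pmb c}Q^{-1}\tilde{\pmb c}^{\,T}-2\tilde k=(\tilde\mu_0+G)^2/F$, and selection of the upper root. Your reason for choosing that root (maximal return at fixed risk, i.e.\ $\tilde\mu_0\geqslant -G$) is sharper than the paper's remark that $\mu_P\geqslant r_f$ on the frontier, since the vertex sits at $\mu_P=r_f-G$, which can lie below $r_f$ (as it does in the numerical example); your explicit assumption $\tilde{\pmb\mu}\neq{\pmb 0}$ also states a nondegeneracy condition that the paper leaves implicit.
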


\begin{proposition}[Tangency of the efficient frontiers]\label{P6}
Suppose that ${\pmb1}Q^{-1}\tilde{\pmb\mu}^{T}\neq0$ and 
\begin{align*}
\frac{1+{\pmb 1}Q^{-1}\tilde{\pmb c}^T}
{{\pmb1}Q^{-1}\tilde{\pmb\mu}^{T}}
>0
\end{align*}

Then the efficient frontiers given in \eqref{P4:eff:frontier} and \eqref{cmc} have a unique common point at the corresponding portfolio
\begin{align}\label{tangent_P}
{\pmb w}^T
=
Q^{-1}
\left(
\frac{
1+{\pmb1}Q^{-1}\tilde{\pmb c}^{\,T}
}
{
{\pmb1}Q^{-1}\tilde{\pmb\mu}^{\,T}
}
\tilde{\pmb\mu}^{\,T}
-
\tilde{\pmb c}^{\,T}
\right),
\end{align}
and these two efficient frontiers are tangent at this point.
\end{proposition}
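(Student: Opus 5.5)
The plan is to compare the two frontiers as functions of the expected return, reading each as the minimal risk attainable at that return. The key observation is that a risky-only portfolio ${\pmb w}\in\mathcal P$ is a feasible point of problem \eqref{equivalent}, with $w_f=0$. For such a portfolio,
\begin{align*}
\tilde{\pmb c}{\pmb w}^T+\tilde k={\pmb c}{\pmb w}^T-c_f+k+c_f={\pmb c}{\pmb w}^T+k,
\end{align*}
so the extended risk \eqref{grm} coincides with $R_P({\pmb w})$. Writing $\tilde\mu_0=\mu_0-r_f$, let $R_{\rm rf}(\mu_0)$ be the optimal value of \eqref{equivalent} and $R_{\rm risky}(\mu_0)$ the optimal value of the problem in Proposition~\ref{P3}. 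Since the risk-free problem minimizes the same function over a larger feasible set, we get
\begin{align*}
R_{\rm rf}(\mu_0)\leqslant R_{\rm risky}(\mu_0)\qquad\text{for every }\mu_0.
\end{align*}
Equality holds exactly when the unique minimizer of \eqref{equivalent} satisfies ${\pmb 1}{\pmb w}^T=1$. Uniqueness of that minimizer follows from strict convexity.

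First I solve \eqref{equivalent} by a Lagrange multiplier $\lambda$. This gives
\begin{align*}
{\pmb w}^T=Q^{-1}(\lambda\tilde{\pmb\mu}^T-\tilde{\pmb c}^T),\qquad \lambda=\frac{\tilde\mu_0+\tilde{\pmb\mu}Q^{-1}\tilde{\pmb c}^T}{\tilde{\pmb\mu}Q^{-1}\tilde{\pmb\mu}^T},
\end{align*}
and it is consistent with \eqref{cmc}. Next I impose ${\pmb 1}{\pmb w}^T=1$. This is linear in $\lambda$ and, because ${\pmb 1}Q^{-1}\tilde{\pmb\mu}^T\neq0$, has the unique solution
\begin{align*}
\lambda^*=\frac{1+{\pmb 1}Q^{-1}\tilde{\pmb c}^T}{{\pmb1}Q^{-1}\tilde{\pmb\mu}^T}.
\end{align*}
Substituting $\lambda^*$ gives exactly \eqref{tangent_P}. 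Since $\lambda$ and $\tilde\mu_0$ determine each other affinely, there is exactly one return level $\mu^*$ at which $R_{\rm rf}=R_{\rm risky}$. A common point of the two curves in the risk--return plane is precisely such a return level, because both curves are graphs of these minimal-risk functions. This gives uniqueness.

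Tangency then follows from a touching argument. The difference $h(\mu_0)=R_{\rm risky}(\mu_0)-R_{\rm rf}(\mu_0)$ is nonnegative, smooth (both terms are explicit quadratics in $\mu_0$), and vanishes at $\mu^*$. Hence $\mu^*$ is a minimum of $h$, so $h'(\mu^*)=0$: the two curves have the same slope there.

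The step I expect to need the most care is checking that the common point lies on the efficient branches of both curves and not on a lower branch. The plan uses the envelope identity $dR_{\rm rf}/d\mu_0=\lambda$. On the capital market curve, $\lambda^*>0$ is equivalent to $\tilde\mu_0+\tilde{\pmb\mu}Q^{-1}\tilde{\pmb c}^T>0$, which is the $+$ square-root branch in \eqref{cmc}. By the equal slopes, $dR_{\rm risky}/d\mu_0=\lambda^*>0$ at $\mu^*$ as well. Since the risky frontier is a parabola in $\mu$ with vertex at $\mu_{R\min}$, this forces $\mu^*>\mu_{R\min}$, so the point lies on the upper branch in \eqref{P4:eff:frontier}. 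I would finish by a direct check that the two curves have equal slopes there, comparing $d\mu/dR$ computed from Corollary~\ref{C} and from \eqref{cmc}.
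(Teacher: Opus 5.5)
Your proof is correct and follows essentially the same route as the paper. Both arguments use the feasible-set inclusion to get $R_{\mathrm{rf}}\leqslant R_{\mathrm{risky}}$, impose ${\pmb 1}{\pmb w}^T=1$ on the minimizer of \eqref{equivalent} to obtain \eqref{tangent_P}, and then apply the touching argument ($H\geqslant 0$ and $H(\mu^\ast)=0$, hence $H'(\mu^\ast)=0$). Your branch check through $dR_{\mathrm{rf}}/d\mu_0=\lambda$ is equivalent to the paper's computation $\tilde\mu_T+\tilde{\pmb\mu}Q^{-1}\tilde{\pmb c}^T=\lambda^\ast\,\tilde{\pmb\mu}Q^{-1}\tilde{\pmb\mu}^T>0$, and via the equal slopes it also gives $\mu^\ast>\mu_{R\min}$ on the risky frontier, which the paper leaves implicit.
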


Unlike in the classical mean-variance case, the tangent portfolio \eqref{tangent_P} and the maximum Sharpe ratio portfolio \eqref{P2:w} are not the same. Moreover, identity \eqref{cmc} is no longer a line; so the generalized quadratic risk measure gives rather a Capital Market Curve \eqref{cmc} instead of the well-known Capital Market Line, which is implied by \eqref{cmc} with $\tilde{\pmb c}={\pmb 0}$ and $\tilde{k}=0$. See also \cite{MondalSelvaraju2019} for the illustration of the Capital Market Curve appearance in the mean--lower partial moment framework.

\begin{proposition}[Generalized mutual fund separation theorem]\label{P7}
Let ${\pmb w}_1,\,{\pmb w}_2,\,\ldots,\,{\pmb w}_m\in\mathcal{P}$ be portfolios lying on the minimal risk frontier with expected returns
$\mu_{0,\,1},\,\mu_{0,\,2},\,\ldots,\,\mu_{0,\,m}$, and linear adjustment vectors ${\pmb c}_1,\,{\pmb c}_2,\,\ldots,\,{\pmb c}_m$ respectively. Let
$
(\lambda_1,\lambda_2,\,\ldots,\lambda_m)\in\mathbb{R}^m
$
and $\mu_0\in\mathbb{R}$ satisfy
\begin{align}\label{syst}
\begin{cases}
\lambda_1+\lambda_2+\cdots+\lambda_m=1\\
\lambda_1\mu_{0,\,1}
+\lambda_2\mu_{0,\,2}
+\cdots
+\lambda_m\mu_{0,\,m}
=
\mu_0
\end{cases}.
\end{align}
Then the portfolio $
\lambda_1{\pmb w}_1
+\lambda_2{\pmb w}_2
+\cdots
+\lambda_m{\pmb w}_m
$ with the linear adjustment vector ${\pmb c}=\lambda_1{\pmb c}_1+\lambda_2{\pmb c}_2+\cdots+\lambda_m{\pmb c}_m$ belongs to the minimum risk frontier also. If, in addition, $\mu_0\geqslant\mu_{\min}$, where $\mu_{\min}$ denotes the expected return of the minimum risk portfolio, computed by \eqref{P1_Min_VaR_Portf}, whose linear adjustment vector is ${\pmb c}$, then such a portfolio is efficient.
\end{proposition}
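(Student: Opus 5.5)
The plan is to read the frontier portfolios off the closed form in Proposition~\ref{P3} and use the fact that this formula is affine in the pair $(\mu_0,{\pmb c})$. Write
\[
M:=\left(Q^{-1}{\pmb \mu}^T,\,Q^{-1}{\pmb 1}^T\right)\begin{pmatrix}
{\pmb \mu}Q^{-1}{\pmb \mu}^T & {\pmb \mu}Q^{-1}{\pmb 1}^T\\
{\pmb \mu}Q^{-1}{\pmb 1}^T & {\pmb 1}Q^{-1}{\pmb 1}^T
\end{pmatrix}^{-1},
\]
which is an $n\times 2$ matrix depending only on $Q$ and ${\pmb\mu}$, and not on $\mu_0$ or ${\pmb c}$. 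By Proposition~\ref{P3}, a portfolio lies on the minimal risk frontier for the data $(\mu_0,{\pmb c})$ if and only if
\[
{\pmb w}^T=M\begin{pmatrix}\mu_0+{\pmb\mu}Q^{-1}{\pmb c}^T\\ 1+{\pmb 1}Q^{-1}{\pmb c}^T\end{pmatrix}-Q^{-1}{\pmb c}^T=:W(\mu_0,{\pmb c}).
\]
The minimizer is unique because $R_P$ is strictly convex on the affine constraint set. I would assume, as in Proposition~\ref{P4}, that ${\pmb\mu}$ and ${\pmb 1}$ are non-colinear, so that the $2\times2$ matrix is invertible.

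The key observation is that $W$ is affine, but not linear, in $(\mu_0,{\pmb c})$. It equals a term linear in $(\mu_0,{\pmb c})$ plus the constant term $M(0,1)^T$. For $i=1,\dots,m$ we have ${\pmb w}_i=W(\mu_{0,i},{\pmb c}_i)$. Taking the combination $\sum_i\lambda_i{\pmb w}_i$, the linear parts combine into the linear part evaluated at $\left(\sum\lambda_i\mu_{0,i},\sum\lambda_i{\pmb c}_i\right)=(\mu_0,{\pmb c})$, using the second equation of \eqref{syst}. The constant parts sum to $\left(\sum\lambda_i\right)M(0,1)^T=M(0,1)^T$, using the first equation of \eqref{syst}. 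Hence
\[
\sum_i\lambda_i{\pmb w}_i=W(\mu_0,{\pmb c}).
\]
By the uniqueness in Proposition~\ref{P3}, this portfolio is the minimum risk portfolio with expected return $\mu_0$ for the adjustment vector ${\pmb c}$, so it lies on the minimum risk frontier. As a sanity check, I would verify directly that ${\pmb 1}\sum\lambda_i{\pmb w}_i^T=\sum\lambda_i=1$ and ${\pmb\mu}\sum\lambda_i{\pmb w}_i^T=\mu_0$, so the constraints hold.

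For the efficiency claim, I would invoke Proposition~\ref{P4} and Corollary~\ref{C} with the adjustment vector ${\pmb c}$. Frontier points with $\mu_0\geqslant\mu_{R\min}$, where $\mu_{R\min}=-e+b(1+f)/a$ is computed with this ${\pmb c}$ and equals $\mu_{\min}$ from \eqref{P1_Min_VaR_Portf}, lie on the upper branch, and those are exactly the efficient portfolios.

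The main obstacle is conceptual rather than computational. The frontier depends on ${\pmb c}$, so one has to be careful that "belongs to the minimum risk frontier" means the frontier built with the combined vector ${\pmb c}=\sum\lambda_i{\pmb c}_i$. The affine-combination argument only works because the $\lambda_i$ sum to one, which cancels the constant term $M(0,1)^T$. I would make this cancellation explicit.
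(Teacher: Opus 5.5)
Your proposal is correct and follows essentially the same route as the paper, which also writes each $\lambda_i{\pmb w}_i^T$ via the closed form \eqref{P3:w}, sums the terms, and uses $\sum\lambda_i=1$, $\sum\lambda_i\mu_{0,i}=\mu_0$ and ${\pmb c}=\sum\lambda_i{\pmb c}_i$ to recover \eqref{P3:w} at $(\mu_0,{\pmb c})$, with efficiency then read off from $\mu_0\geqslant\mu_{\min}$. Your affine-map framing, the explicit cancellation of the constant term $M(0,1)^T$, and the appeal to uniqueness in Proposition~\ref{P3} make the same argument somewhat clearer than the paper's term-by-term display.
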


{\sc Remark 1:}
{\it
Because of \eqref{grm} and \eqref{equivalent}, Proposition~\ref{P7} remains valid for portfolios with a risk-free asset after replacing the asset-return vector ${\pmb\mu}$ by $\tilde{\pmb\mu}={\pmb\mu}-r_f{\pmb1}$, the linear adjustment vector ${\pmb c}$ by $\tilde{\pmb c}$, and the prescribed expected returns by the corresponding prescribed excess returns.
}

Proposition \eqref{P7}, unlike the classical mutual fund separation theorem \cite[Prop. 7]{GG2024}, states that the constituent efficient portfolios may correspond to different linear adjustment vectors. This algebraic property does not rely on assumptions of distributions of random returns. On the other hand, such assumptions are important to prove analogous separation statements in non-quadratic risk measures like value at risk or conditional values at risk; see, for example, \cite{Cass1970}, \cite{Owen1983}, \cite{Ross1978}. See also \cite{DeGiorgi2011} for sufficient conditions for two-fund separation in general reward-risk portfolio models or \cite{AlexanderBaptista2007} on how fund separation changes when alternative risk constraints are added to the Markowitz mean–variance model.

\begin{proposition}[Maximum utility portfolio]\label{P8}
Let $P={\pmb R}{\pmb w}^T$, where ${\pmb w}\in\mathbb{R}^n$ and
${\pmb R}=(r_1,r_2,\,\ldots,\,r_n)$ is the random return vector. Then the utility function
\begin{align*}
U({\pmb w})
=
{\pmb w}{\pmb\mu}^T
-
\left(
\frac12{\pmb w}Q{\pmb w}^T
+
{\pmb c}{\pmb w}^T
+
k
\right),
\end{align*}
subject to the budget constraint
$
{\pmb1}{\pmb w}^T=1,
$
attains its maximum at
\begin{align}\label{max_u}
{\pmb w}^T
=Q^{-1}\left(
\frac{
1+{\pmb1}Q^{-1}{\pmb c}^T
-
{\pmb1}Q^{-1}{\pmb\mu}^T
}
{{\pmb1}Q^{-1}{\pmb1}^T}
{\pmb1}^T
+
{\pmb\mu}^T
-
{\pmb c}^T
\right).
\end{align}
The corresponding maximum value is
\begin{align*}
\max_{{\pmb w}\in\mathcal P}U({\pmb w})
=\frac12\left(
c-2e+g-\frac{(b-f-1)^2}{a}
\right)-k.
\end{align*}
where
\begin{align*}
a&={\pmb1}Q^{-1}{\pmb1}^T,
&
b&={\pmb1}Q^{-1}{\pmb\mu}^T,
&
c&={\pmb\mu}Q^{-1}{\pmb\mu}^T,
\\
e& = {\pmb \mu}Q^{-1}{\pmb c}^T,
&
f&={\pmb1}Q^{-1}{\pmb c}^T,
&
g&={\pmb c}Q^{-1}{\pmb c}^T.
\end{align*}
Moreover, the portfolio \eqref{max_u} is efficient.
\end{proposition}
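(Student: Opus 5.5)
The plan is to run a Lagrange-multiplier argument for maximizing $U$ subject to ${\pmb 1}{\pmb w}^T=1$, and then to read off efficiency from Proposition~\ref{P3} and Proposition~\ref{P4}.

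\emph{Existence and uniqueness of the maximizer.} The function $U({\pmb w})=-\frac12{\pmb w}Q{\pmb w}^T+({\pmb\mu}-{\pmb c}){\pmb w}^T-k$ is strictly concave, since its Hessian is $-Q$ and $Q$ is positive definite. On the affine set $\mathcal P$ it is therefore maximized at the unique stationary point of the Lagrangian
\[
L({\pmb w},\lambda)=U({\pmb w})-\lambda({\pmb 1}{\pmb w}^T-1).
\]

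\emph{Deriving \eqref{max_u}.} Setting the gradient to zero gives $Q{\pmb w}^T={\pmb\mu}^T-{\pmb c}^T-\lambda{\pmb 1}^T$, so ${\pmb w}^T=Q^{-1}({\pmb\mu}^T-{\pmb c}^T-\lambda{\pmb 1}^T)$. Substituting into the budget constraint yields $b-f-\lambda a=1$, hence $\lambda=(b-f-1)/a$. Inserting $-\lambda=(1+f-b)/a$ reproduces exactly \eqref{max_u}.

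\emph{Computing the maximum value.} At the optimum, $Q{\pmb w}^T={\pmb v}^T$ with ${\pmb v}:={\pmb\mu}-{\pmb c}-\lambda{\pmb 1}$. Then
\[
U={\pmb v}{\pmb w}^T-\tfrac12{\pmb w}Q{\pmb w}^T-k+\lambda\,{\pmb 1}{\pmb w}^T=\tfrac12{\pmb v}Q^{-1}{\pmb v}^T+\lambda-k.
\]
Expanding gives
\[
{\pmb v}Q^{-1}{\pmb v}^T=c-2e+g-2\lambda(b-f)+\lambda^2a .
\]
With $\lambda a=b-f-1$ this equals $c-2e+g-(b-f-1)(b-f+1)/a$. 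Adding $2\lambda=2(b-f-1)/a$ leaves $c-2e+g-(b-f-1)^2/a$, which is the claimed formula. This step is pure bookkeeping.

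\emph{Efficiency.} Let $\mu^*={\pmb\mu}{\pmb w}^T$ at the optimum. Any ${\pmb w}'\in\mathcal P$ with the same expected return has $U({\pmb w}')=\mu^*-R_P({\pmb w}')\le U({\pmb w})$, so $R_P({\pmb w})\le R_P({\pmb w}')$. Hence ${\pmb w}$ solves the problem of Proposition~\ref{P3} with $\mu_0=\mu^*$ and lies on the minimum-risk frontier.

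It then remains to show that $\mu^*\ge\mu_{R\min}$. This is the main point needing care, and I would do it by direct computation. We have
\[
\mu^*={\pmb\mu}Q^{-1}({\pmb\mu}^T-{\pmb c}^T-\lambda{\pmb 1}^T)=c-e-b(b-f-1)/a,
\]
while $\mu_{R\min}=-e+b(1+f)/a$. Therefore
\[
\mu^*-\mu_{R\min}=c-b^2/a=d/a\ge0,
\]
by Cauchy--Schwarz in the $Q^{-1}$ inner product, with equality exactly when ${\pmb\mu}$ and ${\pmb 1}$ are collinear. In the collinear case every portfolio has the same return, so efficiency is trivial.

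An alternative argument for this last step: the frontier maximizes $\mu-R_P$, and by Corollary~\ref{C} the map $\mu\mapsto\mu-R_P(\mu)$ has negative slope below $\mu_{R\min}$, so no maximizer can lie there. I would nonetheless present the explicit computation.
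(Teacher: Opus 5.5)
Your derivation of the maximizer and of the maximal value follows the paper's proof: the same Lagrangian, the same multiplier $\lambda=(b-f-1)/a$, and uniqueness from the Hessian $-Q$. Your ${\pmb v}$-bookkeeping correctly fills in the ``simplifying'' step that the paper leaves implicit.

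The efficiency part is where you take a genuinely different route. The paper argues by Pareto dominance in one line. If some ${\pmb w}^\ast\in\mathcal P$ had ${\pmb\mu}{{\pmb w}^\ast}^T\geqslant{\pmb\mu}{\pmb w}^T$ and $R_P({\pmb w}^\ast)\leqslant R_P({\pmb w})$ with one inequality strict, then $U({\pmb w}^\ast)>U({\pmb w})$, a contradiction. That argument needs no formulas and no collinearity case split, and it works for any objective of the form return minus risk.

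You use this comparison only against equal-return portfolios, which puts ${\pmb w}$ on the minimum-risk frontier. You then check by computation that $\mu^\ast-\mu_{R\min}=d/a\geqslant0$. This places ${\pmb w}$ on the upper branch $\mu\geqslant\mu_{R\min}$, which is exactly how Proposition~\ref{P4} defines the efficient frontier. It costs a little algebra, but it yields extra information: the maximum-utility portfolio's expected return exceeds that of the global minimum-risk portfolio by exactly $d/a$, independently of ${\pmb c}$ and $k$. This is consistent with the example, where $32.59-12.16\approx 40.85/2$.

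One slip in your aside. Below $\mu_{R\min}$ the frontier risk is decreasing in $\mu$, so $\mu\mapsto\mu-R_P(\mu)$ has slope larger than $1$ there, not a negative slope. Your conclusion that no maximizer lies there is correct, but it holds for that reason.
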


\begin{proposition}[Maximum utility portfolio with risk-free asset]\label{P9}
    Let $P={\pmb R}{\pmb w}_F^T$, where ${\pmb w}_F\in\mathbb{R}^{n+1}$ is a portfolio and
${\pmb R}=(r_1,r_2,\,\ldots,\,r_n,\,r_f)$ is the random return vector with a risk-free rate. Then the utility function
\begin{align*}
U({\pmb w}_F)
= {\pmb \mu}{\pmb w}^T+w_fr_f - R_P(\pmb{w}_F) = {\pmb w}\tilde{\pmb \mu}^T +r_f- 
\left(
\frac12{\pmb w}Q{\pmb w}^T
+
\tilde{\pmb c}{\pmb w}^T
+
\tilde{k}
\right),
\end{align*}
where $\tilde{{\pmb \mu}}={\pmb \mu}-r_f{\pmb 1}$, $\tilde{\pmb c}={\pmb c}-c_f{\pmb 1}$, and $\tilde{k}=k+c_f$, attains its maximum at
\begin{align}\label{max_u_rf}
{\pmb w}^T
=Q^{-1}\left(
\tilde{\pmb \mu} - \tilde{\pmb c}
\right)^T,\, w_f = 1-{\pmb 1}{\pmb w}^T
\end{align}
and
\begin{align*}
\max_{{\pmb w}\in\mathcal P}U({\pmb w})
= r_f -\tilde{k} + \frac{p+q-2s}{2},
\end{align*}
where
\begin{align*}
p&=\tilde{\pmb\mu}Q^{-1}\tilde{\pmb\mu}^T,
&
q&=\tilde{\pmb c}Q^{-1}\tilde{\pmb c}^T,
&
s&=\tilde{\pmb\mu}Q^{-1}\tilde{\pmb c}^T.
\end{align*}
Moreover, the portfolio \eqref{max_u_rf} is efficient.
\end{proposition}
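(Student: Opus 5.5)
Since ${\pmb w}_F$ has $n+1$ coordinates with $w_f=1-{\pmb 1}{\pmb w}^T$ free, the budget constraint is absorbed. The problem therefore becomes the unconstrained maximization over ${\pmb w}\in\mathbb R^n$ of
\[
U({\pmb w})={\pmb w}\tilde{\pmb\mu}^T+r_f-\tfrac12{\pmb w}Q{\pmb w}^T-\tilde{\pmb c}{\pmb w}^T-\tilde k,
\]
using the rewriting of $R_P({\pmb w}_F)$ established just before \eqref{equivalent}.

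\textbf{Maximizer.} The function $U$ is strictly concave, because its Hessian is $-Q$ and $Q$ is positive definite. Hence any stationary point is the unique global maximizer. Setting the gradient to zero gives $\tilde{\pmb\mu}^T-Q{\pmb w}^T-\tilde{\pmb c}^T={\pmb 0}$, so ${\pmb w}^T=Q^{-1}(\tilde{\pmb\mu}-\tilde{\pmb c})^T$, which is \eqref{max_u_rf}. The value $w_f=1-{\pmb 1}{\pmb w}^T$ then follows from the budget constraint.

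\textbf{Maximum value.} Substituting the maximizer and using the symmetry of $Q^{-1}$, we have ${\pmb w}Q{\pmb w}^T=(\tilde{\pmb\mu}-\tilde{\pmb c})Q^{-1}(\tilde{\pmb\mu}-\tilde{\pmb c})^T=p-2s+q$. The linear terms give ${\pmb w}(\tilde{\pmb\mu}-\tilde{\pmb c})^T=p-2s+q$ as well. Therefore
\[
U_{\max}=r_f-\tilde k+(p+q-2s)-\tfrac12(p+q-2s)=r_f-\tilde k+\tfrac12(p+q-2s),
\]
which is the stated formula.

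\textbf{Efficiency.} The plan is to show that the maximizer solves \eqref{equivalent} for its own target excess return $\tilde\mu_0:={\pmb w}\tilde{\pmb\mu}^T$, and that it lies on the upper branch.

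Take any ${\pmb v}$ with ${\pmb v}\tilde{\pmb\mu}^T=\tilde\mu_0$. Then $U({\pmb v})\le U({\pmb w})$, and the excess-return parts of the two utilities coincide. Hence $R_P({\pmb v}_F)\ge R_P({\pmb w}_F)$, so ${\pmb w}$ minimizes risk at its return level.

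It remains to check that ${\pmb w}$ is on the efficient (upper) part of the curve \eqref{cmc}. The cleanest route is to argue that no portfolio has weakly lower risk and strictly higher expected return. Suppose ${\pmb v}$ had $R_P({\pmb v}_F)\le R_P({\pmb w}_F)$ and ${\pmb v}\tilde{\pmb\mu}^T>\tilde\mu_0$. Then $U({\pmb v})>U({\pmb w})$, which contradicts maximality. So ${\pmb w}$ is undominated, i.e.\ efficient, and it lies on \eqref{cmc} rather than on its lower branch.

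The main subtlety is only this final step: making precise that "efficient" means undominated, i.e.\ on the upper branch of the curve from Proposition~\ref{P5}. The domination argument above handles it without any further computation.
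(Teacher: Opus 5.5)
Your proposal is correct and follows essentially the paper's proof. The paper keeps $w_f$ as a variable in a Lagrangian with a budget multiplier; the $w_f$-equation gives $\lambda=r_f-c_f$, which yields the same stationarity condition you obtain by substituting $w_f=1-{\pmb 1}{\pmb w}^T$ directly, followed by the same negative-definite-Hessian argument and value computation. It then proves efficiency by the domination-contradiction argument from the end of the proof of Proposition~\ref{P8}, which is exactly your final step. Your extra fixed-return step is a harmless addition, and so is the upper-branch remark, which also follows directly from ${\pmb w}\tilde{\pmb\mu}^T=p-s\geqslant -s$.
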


\section{Proofs}\label{sec:proofs}

In this section, we prove the statements formulated in Section \ref{sec:results}.

\begin{proof}[Proof of Proposition \ref{P1}]

Let us set up the Lagrangian function
\begin{align*}
L\left({\pmb w},\,\lambda\right)=\frac{1}{2}{\pmb w}Q{\pmb w}^T+{\pmb c}{\pmb w}^T + k-\lambda\left({\pmb 1}{\pmb w}^T-1\right).
\end{align*}
By computing its partial derivatives with respect to $w_1,\,w_2,\,\ldots,\,w_n$ and $\lambda$, we obtain the following system of linear equations
\begin{align}\label{proof:P1:syst}
\begin{cases}
Q{\pmb w}^T+{\pmb c}^T-\lambda{\pmb 1}^T={\pmb 0}^T \\
{\pmb 1}{\pmb w}^T=1
\end{cases}.
\end{align}
Multiplying the first equation in \eqref{proof:P1:syst} by $Q^{-1}$ we obtain
\begin{align}\label{proof:P1:w1}
    {\pmb w}^T = Q^{-1}\left(\lambda{\pmb 1}^T-{\pmb c}^T\right).
\end{align}
By inserting this expression into the second equation of the system \eqref{proof:P1:syst}, we get
\begin{align*}
\lambda{\pmb 1}Q^{-1}{\pmb 1}^T-{\pmb 1}Q^{-1}{\pmb c}^T = 1\qquad \implies \qquad \lambda = \frac{1+{\pmb 1}Q^{-1}{\pmb c}^T}{{\pmb 1}Q^{-1}{\pmb 1}^T}.
\end{align*}
Substituting this value of $\lambda$ into \eqref{proof:P1:w1} gives
\begin{align}\label{proof:P1:w2}
    {\pmb w}^T = \frac{1+{\pmb 1}Q^{-1}{\pmb c}^T}{{\pmb 1}Q^{-1}{\pmb 1}^T}Q^{-1}{\pmb 1}^T - Q^{-1}{\pmb c}^T.
\end{align}

Since $R_P$ is strictly convex, every stationary point is the unique global minimizer. Therefore, \eqref{proof:P1:w2} is the unique minimizer.

Finally, evaluating $R_P$ at the minimizer \eqref{proof:P1:w2} and using \eqref{risk_non_neg} yields
\begin{align*}
\min\limits_{{\pmb w}\in\mathcal{P}}R_P({\pmb w}) = \frac{1}{2}\frac{\left(1+{\pmb 1}Q^{-1}{\pmb c}^T\right)^2}{{\pmb 1}Q^{-1}{\pmb 1}^T}-\frac{1}{2}{\pmb c}Q^{-1}{\pmb c}^T+k.
\end{align*}
\end{proof}

\begin{proof}[Proof of Proposition \ref{P2}]
Let $\mathcal{R}$ denote the quadratic risk measure $R_P({\pmb w})$  \eqref{mean_var_def} evaluated at ${\pmb w}$. Throughout this proof, we assume that
\begin{align*}
k>\frac12{\pmb c}Q^{-1}{\pmb c}^T,
\end{align*}
which implies $\mathcal{R}=R_P({\pmb w})>0$ for every ${\pmb w}\in\mathcal P$. Hence, the Sharpe ratio is well defined.

We set up the Lagrangian function
\begin{align*}
L({\pmb w},\lambda)
=\frac{{\pmb w}\boldsymbol{\mu}^T-r_f}{\sqrt{\frac{1}{2}{\pmb w}Q{\pmb w}^T+{\pmb c}{\pmb w}^T+k}}-\lambda\left({\pmb 1}{\pmb w}^T-1\right)=\frac{{\pmb w}{\mu}^T-r_f}{\sqrt{\mathcal{R}}}-\lambda\left({\pmb 1}{\pmb w}^T-1\right).
\end{align*}

By computing its partial derivatives with respect to $w_1,\,w_2,\,\ldots,\,w_n$ and $\lambda$, we obtain
\begin{align}\label{proof:P2:system}
\begin{cases}
{\pmb\mu}^T\mathcal{R}
-\dfrac12({\pmb w}{\pmb\mu}^T-r_f)
(Q{\pmb w}^T+{\pmb c}^T)
-\lambda{\pmb1}^T\mathcal{R}^{3/2}
={\pmb0}^T\\
{\pmb1}{\pmb w}^T=1
\end{cases}.
\end{align}

Left-multiplying the first equation in \eqref{proof:P2:system} by ${\pmb w}\in\mathbb{R}^n$ and using the budget constraint ${\pmb 1}{\pmb w}^T=1$ gives
\begin{align*}
&{\pmb w}{\pmb\mu}^T\mathcal{R}
-\dfrac12({\pmb w}{\pmb\mu}^T-r_f)
({\pmb w}Q{\pmb w}^T+{\pmb c}{\pmb w}^T)
-\lambda\mathcal{R}^{3/2}
=0
\end{align*}
and, consequently,
\begin{align}\label{proof:P2:lambda:form}
\lambda
=
\frac{r_f}{\sqrt{\mathcal R}}
+
\frac{
({\pmb w}{\pmb\mu}^T-r_f)
\left({\pmb c}{\pmb w}^T+2k\right)
}
{2\mathcal R^{3/2}}.
\end{align}

Substituting \eqref{proof:P2:lambda:form} into the first equation of \eqref{proof:P2:system} and simplifying, yields

\begin{align}\label{proof:P2:eq1}
\frac{{\pmb w}{\pmb\mu}^T-r_f}{2}
\left(
Q{\pmb w}^T
+
{\pmb c}^T
+
{\pmb1}^T({\pmb c}{\pmb w}^T+2k)
\right)
=
\mathcal R\,\tilde{\pmb\mu}^{\,T},
\end{align}
where $\tilde{{\pmb \mu}}^T={\pmb \mu}^T-r_f{\pmb 1}^T$.

Multiplying \eqref{proof:P2:eq1} by $Q^{-1}$ we obtain

\begin{align}\label{proof:P2:w1}
{\pmb w}^T
=
Q^{-1}
\left(
\frac{2\mathcal R}
{{\pmb w}{\pmb\mu}^T-r_f}
\tilde{\pmb\mu}^{\,T}
-
\left({\pmb c}{\pmb w}^T+2k\right){\pmb1}^T
-
{\pmb c}^T
\right).
\end{align}

Let us denote 
\begin{align}\label{alpha_def}
\alpha:=\frac{2\mathcal{R}}{{\pmb w}{\pmb \mu}^T - r_f},\,
\beta:={\pmb c}{\pmb w}^T+2k.
\end{align}
Then, by left-multiplying \eqref{proof:P2:w1} by ${\pmb 1}$ and ${\pmb c}$ respectively, we obtain the following linear system for the unknowns $\alpha$ and $\beta$
\begin{align}\label{system_2x2}
\begin{pmatrix}
B&-A\\
D&-(C+1)
\end{pmatrix}
\begin{pmatrix}
\alpha \\
\beta
\end{pmatrix}
=
\begin{pmatrix}
C+1\\
E-2k
\end{pmatrix},
\end{align}
where 
\begin{align*}
A&={\pmb1}Q^{-1}{\pmb1}^T,
&
B&={\pmb1}Q^{-1}\tilde{\pmb\mu}^T,
\\
C&={\pmb c}Q^{-1}{\pmb1}^T,
&
D&={\pmb c}Q^{-1}\tilde{\pmb\mu}^T,
&
E&={\pmb c}Q^{-1}{\pmb c}^T.
\end{align*}

The matrix in \eqref{system_2x2} is invertible whenever $\text{det}:=AD-B(C+1)\neq 0$, which implies $\left(1+{\pmb c}Q^{-1}{\pmb 1}^T\right){\pmb \mu}^T \neq \left(r_f + {\pmb c}Q^{-1}{\pmb \mu}^T\right){\pmb 1}^T$ since
\begin{align*}
AD-B(C+1)
&=
\left({\pmb1}Q^{-1}{\pmb1}^T\right)
\left({\pmb c}Q^{-1}\tilde{\pmb\mu}^T\right)
-
\left({\pmb1}Q^{-1}\tilde{\pmb\mu}^T\right)
\left({\pmb c}Q^{-1}{\pmb1}^T+1\right)\\
&=
{\pmb1}Q^{-1}
\left(
{\pmb1}^T\left(r_f+{\pmb c}Q^{-1}{\pmb\mu}^T\right)
-
{\pmb\mu}^T\left(1+{\pmb c}Q^{-1}{\pmb1}^T\right)
\right).
\end{align*}

By inserting the solution of \eqref{system_2x2} into \eqref{proof:P2:w1}, we obtain
\begin{align}\nonumber
{\pmb w}^T&=\left(Q^{-1}\tilde{{\pmb \mu}}^T,\,-Q^{-1}{\pmb 1}^T\right)
\begin{pmatrix}
\alpha\\
\beta
\end{pmatrix}
-Q^{-1}{\pmb c}^T\\ \nonumber
&=\frac{1}{\text{det}}\left(Q^{-1}\tilde{{\pmb \mu}}^T,\,-Q^{-1}{\pmb 1}^T\right)
\begin{pmatrix}
-{\pmb c}Q^{-1}{\pmb 1}^T - 1 & {\pmb 1}Q^{-1}{\pmb 1}^T\\
-{\pmb c}Q^{-1}\tilde{{\pmb \mu}}^T & {\pmb 1}Q^{-1}\tilde{{\pmb \mu}}^T
\end{pmatrix}
\begin{pmatrix}
{\pmb 1}Q^{-1}{\pmb c}^T + 1 \\
 {\pmb c}Q^{-1}{\pmb c}^T -2k
\end{pmatrix}
-Q^{-1}{\pmb c}^T\\
&=
Q^{-1}
\left(
\frac{A(E-2k)-(C+1)^2}
{\text{det}}
\,\tilde{\pmb\mu}^T
+
\frac{D(C+1)-B(E-2k)}
{\text{det}}
\,{\pmb1}^T
-
{\pmb c}^T
\right)
.\label{proof:P2:w:final}
\end{align}

We now insert the solution \eqref{proof:P2:w:final} into the Sharpe ratio. This gives

\begin{align}\label{Sharpe_inserted}
S_P({\pmb w})
=
\frac{{\pmb w}{\pmb\mu}^T-r_f}
{\sqrt{R_P({\pmb w})}}
=
\frac{\alpha F+\beta B-D}
{\sqrt{\frac12\alpha^2F+\alpha\beta B+\frac12A\beta^2-\frac12E+k}},
\end{align}
where
\begin{align*}
\alpha=
\frac{A(E-2k)-(C+1)^2}{\mathrm{det}},
\qquad
\beta=
\frac{D(C+1)-B(E-2k)}{\mathrm{det}},\qquad 
F = \tilde{\pmb \mu}Q^{-1}\tilde{\pmb \mu}^T.
\end{align*}

The denominator of \eqref{Sharpe_inserted}, multiplied by $\text{det}^2$, is
\begin{align*}
&\left(\frac12\alpha^2F+\alpha\beta B+\frac12A\beta^2-\frac12E+k
\right)\text{det}^2\\
&=
\frac12F\left(A(E-2k)-(C+1)^2\right)^2
+B\left(A(E-2k)-(C+1)^2\right)
\left(D(C+1)-B(E-2k)\right)
\\
&+
\frac12A\left(D(C+1)-B(E-2k)\right)^2
+\left(k-\frac12E\right)\left(AD-B(C+1)\right)^2\\
&=\frac{1}{2}\Delta\Omega,
\end{align*}
where
\begin{align*}
\Delta:=(C+1)^2-A(E-2k), \quad \Omega:=(2k-E)(AF-B^2)
+F(C+1)^2
-2BD(C+1)
+AD^2.
\end{align*}
For the numerator of \eqref{Sharpe_inserted}, we obtain
\begin{align*}
&(\alpha F+\beta B-D)\text{det}
=
F(A(E-2k)-(C+1)^2)
+
B(D(C+1)-B(E-2k))
-
D\,\mathrm{det}
\\
&=
F(A(E-2k)-(C+1)^2)
+
B(D(C+1)-B(E-2k))
-
D(AD-B(C+1))
\\
&=
(E-2k)(AF-B^2)
-
F(C+1)^2
+
2BD(C+1)
-
AD^2
=
-\Omega,
\end{align*}

One can observe that $\Delta > 0$ since $k>E/2$. Moreover, in the expression of $\Omega$ the first part $(2k-E)(AF-B^2)$ is positive since $k>E/2$ and $AF-B^2> 0$.
The second part in $\Omega$ is $F(C+1)^2-2BD(C+1)+AD^2$ and it is also positive, since it can be rewritten as
\begin{align*}
    \Big(
        C+1 ,\, D
    \Big)
    \begin{pmatrix}
        F & -B \\ -B & A
    \end{pmatrix}
    \begin{pmatrix}
        C+1 \\ D
    \end{pmatrix},
\end{align*}
where the middle matrix is positive definite since $AF-B^2>0$ and $F>0$. 

It remains to prove that the solution \eqref{proof:P2:w:final} indeed maximizes the Sharpe ratio. Let us denote the maximizer from \eqref{proof:P2:w:final} ${\pmb w}_{\max}$ and take an arbitrary feasible portfolio $\tilde{\pmb w}={\pmb w}_{\max}+{\pmb \xi}$, where ${\pmb \xi}\in\mathbb{R}^n$ and ${\pmb 1}{\pmb \xi}^T=0$ because of the budget constraint. Then, the numerator of \eqref{Sharpe_inserted} is 
\begin{align*}
\tilde{{\pmb w}}{\pmb \mu}^T-r_f={\pmb w}_{\max}{\pmb \mu}^T-r_f+{\pmb \xi}{\pmb \mu}^T
={\pmb w}_{\max}{\pmb \mu}^T-r_f+{\pmb \xi}\tilde{{\pmb \mu}}^T
,
\end{align*}
while the risk measure in the denominator of \eqref{Sharpe_inserted} is 
\begin{align*}
R_P(\tilde{\pmb w})=R_P({\pmb w}_{\max})+{\pmb \xi}\left(Q{\pmb w}^T_{\max}+{\pmb c}^T\right)+\frac{1}{2}{\pmb \xi}Q{\pmb \xi}^T.
\end{align*}
Using \eqref{proof:P2:w1},
\begin{align*}
Q{\pmb w}^T_{\max}+{\pmb c}^T=\alpha\tilde{\pmb \mu}^T-\beta {\pmb 1}^T-{\pmb c}^T+{\pmb c}^T
=\alpha\tilde{\pmb \mu}^T-\beta {\pmb 1}^T
\end{align*}
and, consequently,
\begin{align*}
{\pmb \xi}\left(Q{\pmb w}^T_{\max}+{\pmb c}^T\right)=\alpha{\pmb \xi}\tilde{\pmb \mu}^T.
\end{align*}
Therefore,
\begin{align*}
R_P(\tilde{\pmb w})=R_P({\pmb w}_{\max})+\alpha{\pmb \xi}\tilde{\pmb \mu}^T+\frac{1}{2}{\pmb \xi}Q{\pmb \xi}^T.
\end{align*}
Thus, we have to prove
\begin{align}\label{to_prove}
S_p(\tilde{\pmb w})=\frac{{\pmb w}_{\max}{\pmb \mu}^T-r_f+{\pmb \xi}\tilde{{\pmb \mu}}^T}{\sqrt{R_P({\pmb w}_{\max})+\alpha{\pmb \xi}\tilde{\pmb \mu}^T+\frac{1}{2}{\pmb \xi}Q{\pmb \xi}^T}}<
\frac{{\pmb w}_{\max}{\pmb \mu}^T-r_f}{\sqrt{R_P({\pmb w}_{\max})}}=S_P(\pmb w_{\max})
\end{align}
for all ${\pmb \xi}\in\mathbb{R}^n\setminus \{{\pmb 0}\}$ such that ${\pmb 1}{\pmb \xi}^T=0$.

Evaluating \eqref{alpha_def} at ${\pmb w}={\pmb w}_{\max}$, we obtain,
\begin{align*}
\alpha=\frac{2R_P({\pmb w}_{\max})}{{\pmb w}_{\max}{\pmb \mu}^T-r_f}=\frac{2y}{x},
\end{align*}
where $x={\pmb w}_{\max}{\pmb \mu}^T-r_f$, $y=R_P({\pmb w}_{\max})$. By denoting additionally $\varepsilon={\pmb \xi}\tilde{{\pmb \mu}}^T$, $\delta=\frac{1}{2}{\pmb \xi}Q{\pmb \xi}^T$ and looking to \eqref{to_prove}, we determine the inequality
\begin{align*}
\frac{x+\varepsilon}{\sqrt{y+\alpha\varepsilon+\delta}}<\frac{x}{\sqrt{y}}
\qquad \Longleftrightarrow \qquad x+\varepsilon<\sqrt{(x+\varepsilon)^2-\varepsilon^2+\frac{x^2}{y}\delta}.
\end{align*}
The latter inequality is trivial if $x+\varepsilon<0$. Otherwise, it is equivalent to
\begin{align*}
\frac{\varepsilon^2}{\delta}<\frac{x^2}{y}.
\end{align*}

Therefore, observing that $x^2/y=S^2_P({\pmb w}_{\max})=2\Omega/\Delta$, inequality \eqref{to_prove} follows if we can show
\begin{align}\label{to_prove1}
\frac{({\pmb \xi} \tilde{\pmb \mu}^T)^2}{{\pmb \xi}Q {\pmb \xi}^T}<\frac{\Omega}{\Delta}.
\end{align}
for all ${\pmb \xi}\in\mathbb{R}^n\setminus \{{\pmb 0}\}$ such that ${\pmb 1}{\pmb \xi}^T=0$.

Let $\lambda\in\mathbb{R}$. By using ${\pmb 1}{\pmb \xi}^T=0$ and the Cauchy–Schwarz inequality, we obtain
\begin{align*}
&\left({\pmb \xi} \tilde{\pmb \mu}^T\right)^2=({\pmb \xi} (\tilde{\pmb \mu}-\lambda {\pmb 1})^T)^2
\leqslant\left({\pmb \xi}Q{\pmb \xi}^T\right)\left((\tilde{\pmb \mu}-\lambda {\pmb 1})Q^{-1}(\tilde{\pmb \mu}-\lambda {\pmb 1})^T\right)\\
&=\left({\pmb \xi}Q{\pmb \xi}^T\right) \left(F-2\lambda B+\lambda^2A\right).
\end{align*}
Since the quadratic function $F-2\lambda B+\lambda^2A$ is minimal at $\lambda=B/A$, we obtain
\begin{align*}
\frac{({\pmb \xi} \tilde{\pmb \mu}^T)^2}{{\pmb \xi}Q {\pmb \xi}^T}\leqslant F-\frac{B^2}{A}.
\end{align*}
Therefore, inequality \eqref{to_prove1} follows if
\begin{align*}
F-\frac{B^2}{A}<\frac{\Omega}{\Delta},
\end{align*}
but this is true because
\begin{align*}
A\Omega-(AF-B^2)\Delta
&=AF(C+1)^2-2ABD(C+1)+A^2D^2-(AF-B^2)(C+1)^2\\
&=A^2D^2-2ABD(C+1)+B^2(C+1)^2\\
&=(AD-B(C+1))^2=\text{det}^2>0.
\end{align*}
\end{proof}

\begin{proof}[Proof of Proposition \ref{P3}]
    Let us set up the Lagrangian function
\begin{align*}
L({\pmb w},\,\lambda_1,\,\lambda_2)=\frac{1}{2}{\pmb w}Q{\pmb w}^T+{\pmb c}{\pmb w}^T+k-\lambda_1\left({\pmb \mu}{\pmb w}^T-\mu_0\right)-\lambda_2\left({\pmb 1}{\pmb w}^T-1\right)
\end{align*}
and compute its partial derivatives with respect to $w_1,\,w_2,\,\ldots,\,w_n$, $\lambda_1$ and $\lambda_2$:
\begin{align}\label{proof:P3:system}
\begin{cases}
Q{\pmb w}^T + {\pmb c}^T-\lambda_1{\pmb \mu}^T-\lambda_2{\pmb 1}^T={\pmb 0}^T \\
{\pmb \mu}{\pmb w}^T=\mu_0\\
{\pmb 1}{\pmb w}^T=1 
\end{cases}.
\end{align}
Multiplying the first equation of this system by $Q^{-1}$ yields
\begin{align}\label{proof:P3:w1}
    {\pmb w}^T = \lambda_1Q^{-1}{\pmb \mu}^T+\lambda_2Q^{-1}{\pmb 1}^T - Q^{-1}{\pmb c}^T = Q^{-1}
    \left(
        {\pmb \mu}^T,\, {\pmb 1}^T
    \right)
    \begin{pmatrix}
        \lambda_1 \\
        \lambda_2
    \end{pmatrix} - Q^{-1}{\pmb c}^T.
\end{align}
Multiplying \eqref{proof:P3:w1} by vectors ${\pmb \mu}$ and ${\pmb 1}$ respectively, and using the last two equations of \eqref{proof:P3:system}, we obtain
\begin{align*}
\begin{cases}
\lambda_1{\pmb \mu}Q^{-1}{\pmb \mu}^T+\lambda_2{\pmb \mu}Q^{-1}{\pmb 1}^T - {\pmb \mu}Q^{-1}{\pmb c}^T = \mu_0 \\
\lambda_1{\pmb 1}Q^{-1}{\pmb \mu}^T+\lambda_2{\pmb 1}Q^{-1}{\pmb 1}^T - {\pmb 1}Q^{-1}{\pmb c}^T= 1
\end{cases}
\end{align*}
or, in matrix form,
\begin{align}\label{M_def}
\underbrace{\begin{pmatrix}
    {\pmb \mu}Q^{-1}{\pmb \mu}^T & {\pmb \mu}Q^{-1}{\pmb 1}^T \\
    {\pmb \mu}Q^{-1}{\pmb 1}^T & {\pmb 1}Q^{-1}{\pmb 1}^T
\end{pmatrix}}_{=:M}
\begin{pmatrix}
    \lambda_1 \\
    \lambda_2
\end{pmatrix}
=
\begin{pmatrix}
    \mu_0 + {\pmb \mu}Q^{-1}{\pmb c}^T \\
    1 + {\pmb 1}Q^{-1}{\pmb c}^T
\end{pmatrix}.
\end{align}
The matrix $M$ is invertible since
\begin{align*}
\det(M)
=
({\pmb 1}Q^{-1}{\pmb 1}^T)
({\pmb \mu}Q^{-1}{\pmb \mu}^T)
-
({\pmb 1}Q^{-1}{\pmb \mu}^T)^2
>0,
\end{align*}
where the strict inequality follows from the Cauchy--Schwarz inequality because the vectors ${\pmb 1}$ and ${\pmb \mu}$ are non-collinear. Then
\begin{align}\label{proof:P3:lambdas}
    \begin{pmatrix}
    \lambda_1 \\
    \lambda_2
\end{pmatrix} = 
\begin{pmatrix}
    {\pmb \mu}Q^{-1}{\pmb \mu}^T & {\pmb \mu}Q^{-1}{\pmb 1}^T \\
    {\pmb \mu}Q^{-1}{\pmb 1}^T & {\pmb 1}Q^{-1}{\pmb 1}^T
\end{pmatrix}^{-1}
\begin{pmatrix}
    \mu_0 + {\pmb \mu}Q^{-1}{\pmb c}^T \\
    1 + {\pmb 1}Q^{-1}{\pmb c}^T
\end{pmatrix}
\end{align}
and substituting this expression into \eqref{proof:P3:w1} yields
\begin{align}\label{proof:P3:w:final}
    {\pmb w}^T = \left(
        Q^{-1}{\pmb \mu}^T,\, Q^{-1}{\pmb 1}^T
    \right)
    \begin{pmatrix}
    {\pmb \mu}Q^{-1}{\pmb \mu}^T & {\pmb \mu}Q^{-1}{\pmb 1}^T \\
    {\pmb \mu}Q^{-1}{\pmb 1}^T & {\pmb 1}Q^{-1}{\pmb 1}^T
\end{pmatrix}^{-1}
\begin{pmatrix}
    \mu_0+{\pmb \mu}Q^{-1}{\pmb c}^T \\
    1+{\pmb 1}Q^{-1}{\pmb c}^T
\end{pmatrix} -Q^{-1}{\pmb c}^T.
\end{align}

The proof that the portfolio \eqref{proof:P3:w:final} is the unique minimizer is identical to the corresponding argument in the proof of Proposition~\ref{P1}, as the strict convexity of the quadratic risk measure is unaffected by the additional linear constraint.
\end{proof}

\begin{proof}[Proof of Proposition \ref{P4}]
By Proposition~\ref{P3}, the matrix $M$ from \eqref{M_def} is invertible and
\begin{align*}
M^{-1}=\begin{pmatrix}
{\pmb\mu}Q^{-1}{\pmb\mu}^T&{\pmb\mu}Q^{-1}{\pmb 1}^T\\
{\pmb 1}Q^{-1}{\pmb\mu}^T&{\pmb 1}Q^{-1}{\pmb 1}^T
\end{pmatrix}^{-1}
=\frac{1}{d}
\begin{pmatrix}
{\pmb 1}Q^{-1}{\pmb 1}^T&-{\pmb\mu}Q^{-1}{\pmb 1}^T\\
-{\pmb\mu}Q^{-1}{\pmb 1}^T&
{\pmb\mu}Q^{-1}{\pmb\mu}^T
\end{pmatrix},
\end{align*}
where $d=\left({\pmb\mu}Q^{-1}{\pmb\mu}^T\right)\left({\pmb 1}Q^{-1}{\pmb 1}^T\right)-\left({\pmb\mu}Q^{-1}{\pmb 1}^T\right)^2>0$ if ${\pmb 1}$ and ${\pmb \mu}$ are non-collinear. 
For convenience, let
\begin{align*}
a&={\pmb1}Q^{-1}{\pmb1}^T,
&
b&={\pmb\mu}Q^{-1}{\pmb1}^T,
&
c&={\pmb\mu}Q^{-1}{\pmb\mu}^T,
&
d&=ac-b^2,\\
e&={\pmb\mu}Q^{-1}{\pmb c}^T,
&
f&={\pmb1}Q^{-1}{\pmb c}^T,
&
g&={\pmb c}Q^{-1}{\pmb c}^T.
\end{align*}

Additionally, let $\gamma=\mu_0+e$ and $\kappa=1+f$. Substituting \eqref{proof:P3:w:final} into the quadratic form ${\pmb w}Q{\pmb w}^T$, we obtain
\begin{align*}
{\pmb w}Q{\pmb w}^T &= \left(
\frac{1}{d}\left(
\mu_0+e ,\, 1+f
\right)
\begin{pmatrix}
a & -b\\
-b & c
\end{pmatrix}
\left(
     Q^{-1}{\pmb \mu}^T ,\,  Q^{-1}{\pmb 1}^T
\right)^T-{\pmb c}Q^{-1}\right) \\
&\times
\left(
\frac{1}{d}
\left(
{\pmb \mu}^T ,\, {\pmb 1}^T
\right)
\begin{pmatrix}
a & -b\\
-b & c
\end{pmatrix}
\begin{pmatrix}
\mu_0+e \\ 1+f
\end{pmatrix} - {\pmb c}^T
\right)\\
&= \frac{1}{d}\left(a\gamma^2-2b\gamma\kappa+c\kappa^2\right) -\frac{2}{d}\left(\gamma(ae-bf)+\kappa(cf-be)\right) + g.
\end{align*}

Furthermore,
\begin{align*}
    {\pmb c}{\pmb w}^T = \frac{\gamma(ae-bf)+\kappa(cf-be)}{d} - g.
\end{align*}
Hence, 
\begin{align*}
&R_{P}({\pmb w}) = \frac{1}{2}{\pmb w}Q{\pmb w}^T + {\pmb c}{\pmb w}^T + k =\frac{\gamma^2a-2b\gamma\kappa + c\kappa^2}{2d} - \frac{g}{2} + k.
\end{align*}
Finally, substituting the definitions of $\gamma$ and $\kappa$ and simplifying, yields
\begin{align*}
R_{P}({\pmb w})=\frac{a}{2d}\cdot \mu_0^2
+ \frac{ae - b(1+f)}{d}\cdot \mu_0
+ \frac{ae^2 - 2be(1+f) + c(1+f)^2}{2d}
- \frac{g}{2} + k,
\end{align*}
which completes the proof.
\end{proof}

\begin{proof}[Proof of Corolary \ref{C}]
Completing the square in \eqref{P4:eff:frontier} and noting that $b^2 = ac-d$, gives
\begin{align}\label{square}
R_P
&=
\frac{a}{2d}
\left(
\mu+e-\frac{b(1+f)}{a}
\right)^2
+
\frac{
(1+f)^2-ag
}
{2a}+k.
\end{align}
The second term $+k$ on the right-hand side of \eqref{square} equals $R_{P \min}$,
which is the minimum value of the quadratic polynomial. Therefore,
\begin{align*}
R_P-R_{P \min}
=
\frac{a}{2d}
\left(
\mu+e-\frac{b(1+f)}{a}
\right)^2.
\end{align*}
Since the efficient frontier satisfies $\mu\geqslant \mu_{R \min}=-e+b(1+f)/a$,
the positive square-root branch must be chosen. Hence
\begin{align*}
\mu
=
-e+\frac{b(1+f)}{a}
+
\sqrt{\frac{2d}{a}}
\sqrt{R_P-R_{P \min}},
\qquad
R_P\geqslant R_{P\min},
\end{align*}
which proves the statement.
\end{proof}

\begin{proof}[Proof of Proposition \ref{P5}]\
Let us set up the Lagrangian function
\begin{align*}
L({\pmb w},\,\lambda)
=
\frac12{\pmb w}Q{\pmb w}^T
+
\tilde{\pmb c}{\pmb w}^T
+
\tilde{k}
-
\lambda
\left(
\tilde{\pmb\mu}{\pmb w}^T-\tilde{\mu}_0
\right).
\end{align*}

Computing the partial derivatives of this function with respect to $w_1,\,w_2,\,\ldots,\,w_n$ and $\lambda$ yields
\begin{align}\label{proof:P5:system}
\begin{cases}
Q{\pmb w}^T+\tilde{\pmb c}^{\,T}
-\lambda\tilde{\pmb\mu}^{\,T}
={\pmb0}^T\\
\tilde{\pmb\mu}{\pmb w}^T=\tilde{\mu}_0
\end{cases}.
\end{align}

Multiplying the first equation of \eqref{proof:P5:system} by $Q^{-1}$ gives
\begin{align}\label{proof:P5:w1}
{\pmb w}^T
=
Q^{-1}
\left(
\lambda\tilde{\pmb\mu}^{\,T}
-
\tilde{\pmb c}^{\,T}
\right).
\end{align}

Substituting \eqref{proof:P5:w1} into the second equation of
\eqref{proof:P5:system} yields
\begin{align*}
\tilde{\pmb\mu}Q^{-1}
\left(
\lambda\tilde{\pmb\mu}^{\,T}
-
\tilde{\pmb c}^{\,T}
\right)
=
\tilde{\mu}_0,
\end{align*}

and consequently,
\begin{align*}
\lambda
=
\frac{
\tilde{\mu}_0
+
\tilde{\pmb\mu}Q^{-1}\tilde{\pmb c}^{\,T}
}
{
\tilde{\pmb\mu}Q^{-1}\tilde{\pmb\mu}^{\,T}
}.
\end{align*}

Substituting this expression into \eqref{proof:P5:w1}, we obtain
\begin{align}\label{proof:P5:w2}
{\pmb w}^T
=
Q^{-1}
\left(
\frac{
\tilde{\mu}_0
+
\tilde{\pmb\mu}Q^{-1}\tilde{\pmb c}^{\,T}
}
{
\tilde{\pmb\mu}Q^{-1}\tilde{\pmb\mu}^{\,T}
}
\tilde{\pmb\mu}^{\,T}
-
\tilde{\pmb c}^{\,T}
\right).
\end{align}

Since the generalized quadratic risk measure is strictly convex, every stationary point is the unique global minimizer. Therefore, \eqref{proof:P5:w2} is the unique minimizer.

Substituting \eqref{proof:P5:w2} into the generalized quadratic risk measure gives
\begin{align*}
R_P({\pmb w})
=
\frac12{\pmb w}Q{\pmb w}^T
+
\tilde{\pmb c}{\pmb w}^T
+
\tilde{k}
=
\frac{
\left(
\tilde{\mu}_0
+
\tilde{\pmb\mu}Q^{-1}\tilde{\pmb c}^{\,T}
\right)^2
}
{
2\tilde{\pmb\mu}Q^{-1}\tilde{\pmb\mu}^{\,T}
}
-
\frac12
\tilde{\pmb c}Q^{-1}\tilde{\pmb c}^{\,T}
+
\tilde{k}.
\end{align*}

Since $\mu_P=\tilde{\mu}_0+r_f$, solving the above identity for $\mu_P$ yields
\begin{align*}
\mu_P-r_f
=
\pm
\sqrt{
\left(
2R_P({\pmb w})
+
\tilde{\pmb c}Q^{-1}\tilde{\pmb c}^{\,T}
-
2\tilde{k}
\right)
\left(
\tilde{\pmb\mu}Q^{-1}\tilde{\pmb\mu}^{\,T}
\right)
}
-
\tilde{\pmb\mu}Q^{-1}\tilde{\pmb c}^{\,T}.
\end{align*}
Since $\mu_P\geqslant r_f$ on the efficient frontier, the positive branch is selected and the proof follows.
\end{proof}

\begin{proof}[Proof of Proposition \ref{P6}]
The common point of the minimal risk frontiers in Proposition \ref{P4} with $\mu\in\mathbb{R}$ and Proposition \ref{P5} corresponds to a portfolio satisfying $w_f=0$. Since $w_f=1-{\pmb1}{\pmb w}^T$, this is equivalent to imposing the condition ${\pmb1}{\pmb w}^T=1$. For such a portfolio, $c_fw_f=0$. However, $c_f$ is not necessarily zero. Therefore, after imposing $w_f=0$, Proposition~\ref{P5} reduces to the optimization problem considered in Proposition~\ref{P4} with ${\pmb c}=\tilde{\pmb c}$ and $k=\tilde{k}$. Applying the budget constraint ${\pmb 1}{\pmb w}^T=1$ to identity \eqref{proof:P5:w2} yields

\begin{align*}
\frac{
\tilde{\mu}_0
+
\tilde{\pmb\mu}Q^{-1}\tilde{\pmb c}^{T}
}
{
\tilde{\pmb\mu}Q^{-1}\tilde{\pmb\mu}^{\,T}
}=
\frac{1+{\pmb 1}Q^{-1}\tilde{\pmb c}^T}{{\pmb 1}Q^{-1}{\tilde{\pmb \mu}^T}}
\end{align*}
and, by substituting this identity back into \eqref{proof:P5:w2}, we obtain
\begin{align}\label{establishes}
{\pmb w}^T=Q^{-1}\left(\frac{1+{\pmb 1}Q^{-1}\tilde{\pmb c}^T}{{\pmb 1}Q^{-1}\tilde{\pmb \mu}^T}\tilde{\pmb \mu}^T-\tilde{\pmb c}^T\right).
\end{align}

The excess return $\tilde{\mu}_{T}=\mu_T-r_f$ of the common portfolio, denote it by $T$, in \eqref{establishes} is 
\begin{align*}
\tilde{\mu}_{T}=\tilde{\pmb \mu}Q^{-1}\left(\frac{1+{\pmb 1}Q^{-1}\tilde{\pmb c}^T}{{\pmb 1}Q^{-1}\tilde{\pmb \mu}^T}\tilde{\pmb \mu}^T-\tilde{\pmb c}^T\right).
\end{align*}
On the other hand, by \eqref{cmc}
\begin{align*}
\tilde{\mu}_{R \min}=\mu_{R \min}-r_f=-\tilde{\pmb \mu} Q^{-1}\tilde{\pmb c}^T.
\end{align*}
Therefore, the common point of the minimal risk frontiers is also a common point of the efficient frontiers if
\begin{align*}
\tilde{\mu}_{T}-\tilde{\mu}_{R \min}=\frac{(1+{\pmb 1}Q^{-1}\tilde{\pmb c}^T)\tilde{\pmb \mu}Q^{-1}\tilde{\pmb \mu}^T}{{\pmb 1}Q^{-1}\tilde{\pmb \mu}^T}>0.
\end{align*}

To prove tangency, it suffices to observe $H(\mu)=F_4(\mu)- F_5(\mu)\geqslant0$ for all $\mu\in\mathbb{R}$, where
\begin{align*}
F_4(\mu)
&=
\min
\left\{
R_P({\pmb w}):
{\pmb1}{\pmb w}^T=1,\,
{\pmb\mu}{\pmb w}^T=\mu
\right\},\\
F_5(\mu)
&=
\min
\left\{
R_P({\pmb w},w_f):
{\pmb1}{\pmb w}^T+w_f=1,\,
{\pmb\mu}{\pmb w}^T+r_fw_f=\mu
\right\}.
\end{align*}
Indeed, the feasible set defining $F_4(\mu)$ is contained in that
defining $F_5(\mu)$, since every portfolio feasible for $F_4(\mu)$ is
also feasible for $F_5(\mu)$ by taking $w_f=0$. Portfolio \eqref{establishes} is the unique efficient portfolio satisfying
$w_f=0$, and therefore the corresponding expected return
$\mu^\ast$ is unique, so $H(\mu^\ast)=0$. Hence $\mu^\ast$ is a minimum of the
differentiable function $H$, and therefore $H'(\mu^\ast)=F_4'(\mu^\ast)-F_5'(\mu^\ast)=0$,
showing that the two efficient frontiers are tangent at their unique
common point.
\end{proof}

\begin{proof}[Proof of Proposition \ref{P7}]
Let ${\pmb w}_1,\, {\pmb w}_2,\,\ldots,\,{\pmb w}_m\in\mathcal{P}$ denote the portfolios, constructed from the same $n$ risky assets $A_1,\, A_2,\, \ldots,\, A_n$. Assume that $\mu_{0,1},\,\mu_{0,2},\,\ldots,\,\mu_{0,m}$ are the expected returns and ${\pmb c}_1,\,{\pmb c}_2,\,\ldots,\,{\pmb c}_m$ are linear adjustment vectors of these portfolios respectively. If $(\lambda_1,\,\lambda_2,\,\ldots,\,\lambda_m)\in\mathbb{R}^m$, $\mu_0\in\mathbb{R}$ satisfies
\begin{align*}
\begin{cases}
\lambda_1+\lambda_2+\cdots+\lambda_m=1,\\
\lambda_1\mu_{0,\,1}
+\lambda_2\mu_{0,\,2}
+\cdots
+\lambda_m\mu_{0,\,m}
=
\mu_0
,
\end{cases}
\end{align*}
then, by \eqref{P3:w}, we obtain
\begin{align*}
&\lambda_1{\pmb w}_1^T+\lambda_2{\pmb w}_2^T+\cdots+\lambda_m{\pmb w}_m^T\\
&=\begin{pmatrix}
        Q^{-1}{\pmb \mu}^T &Q^{-1}{\pmb 1}^T
    \end{pmatrix} 
    \begin{pmatrix}
    {\pmb \mu}Q^{-1}{\pmb \mu}^T & {\pmb \mu}Q^{-1}{\pmb 1}^T \\
    {\pmb \mu}Q^{-1}{\pmb 1}^T & {\pmb 1}Q^{-1}{\pmb 1}^T
\end{pmatrix}^{-1}
\begin{pmatrix}
    \lambda_1(\mu_{0,1}+{\pmb \mu}Q^{-1}{\pmb c}_1^T) \\
    \lambda_1(1+{\pmb 1}Q^{-1}{\pmb c}_1^T)
\end{pmatrix} -\lambda_1Q^{-1}{\pmb c}_1^T \\
&+
\begin{pmatrix}
        Q^{-1}{\pmb \mu}^T &Q^{-1}{\pmb 1}^T
    \end{pmatrix} 
    \begin{pmatrix}
    {\pmb \mu}Q^{-1}{\pmb \mu}^T & {\pmb \mu}Q^{-1}{\pmb 1}^T \\
    {\pmb \mu}Q^{-1}{\pmb 1}^T & {\pmb 1}Q^{-1}{\pmb 1}^T
\end{pmatrix}^{-1}
\begin{pmatrix}
    \lambda_2(\mu_{0,2}+{\pmb \mu}Q^{-1}{\pmb c}_2^T) \\
    \lambda_2(1+{\pmb 1}Q^{-1}{\pmb c}_2^T)
\end{pmatrix} -\lambda_2Q^{-1}{\pmb c}_2^T\\
&\,\,\,\vdots\\
&+
\begin{pmatrix}
        Q^{-1}{\pmb \mu}^T &Q^{-1}{\pmb 1}^T
    \end{pmatrix} 
    \begin{pmatrix}
    {\pmb \mu}Q^{-1}{\pmb \mu}^T & {\pmb \mu}Q^{-1}{\pmb 1}^T \\
    {\pmb \mu}Q^{-1}{\pmb 1}^T & {\pmb 1}Q^{-1}{\pmb 1}^T
\end{pmatrix}^{-1}
\begin{pmatrix}
    \lambda_m(\mu_{0,m}+{\pmb \mu}Q^{-1}{\pmb c}_m^T) \\
    \lambda_m(1+{\pmb 1}Q^{-1}{\pmb c}_m^T)
\end{pmatrix} -\lambda_mQ^{-1}{\pmb c}_m^T\\
&=\begin{pmatrix}
        Q^{-1}{\pmb \mu}^T &Q^{-1}{\pmb 1}^T
    \end{pmatrix} 
    \begin{pmatrix}
    {\pmb \mu}Q^{-1}{\pmb \mu}^T & {\pmb \mu}Q^{-1}{\pmb 1}^T \\
    {\pmb \mu}Q^{-1}{\pmb 1}^T & {\pmb 1}Q^{-1}{\pmb 1}^T
\end{pmatrix}^{-1}
\begin{pmatrix}
   \mu_0+{\pmb \mu}Q^{-1}{\pmb c}^T \\
    1+{\pmb 1}Q^{-1}{\pmb c}^T
\end{pmatrix} -Q^{-1}{\pmb c}^T.
\end{align*}
Thus, the portfolio $\lambda_1{\pmb w}_1+\lambda_2{\pmb w}_2+\cdots+\lambda_m{\pmb w}_m$ with ${\pmb c}=\lambda_1{\pmb c}_1+\lambda_2{\pmb c}_2+\ldots+\lambda_m{\pmb c}_m$ also satisfies \eqref{P3:w}, and belongs to the upper branch of the hyperbola \eqref{P4:eff:frontier} (is efficient) if $\mu_0=\lambda_1\mu_{0,\,1}+\lambda_2\mu_{0,\,2}+\cdots+\lambda_m\mu_{0,m}\geqslant \mu_{\min}$, where $\mu_{\min}$ denotes the expected return of the minimum risk portfolio, computed by \eqref{P1_Min_VaR_Portf}, whose linear adjustment vector is ${\pmb c}$.
\end{proof}

\begin{proof}[Proof of Proposition \ref{P8}]
Consider the Lagrangian
\begin{align*}
    L({\pmb w},\,\lambda) = {\pmb w}{\pmb \mu}^T - \left(\frac{1}{2}{\pmb w}Q{\pmb w}^T + {\pmb c}{\pmb w}^T + k\right) -\lambda\left({\pmb 1}{\pmb w}^T - 1\right).
\end{align*}
Computing the partial derivatives with respect to
$w_1,\,w_2,\,\ldots,\,w_n$ and $\lambda$ yields
\begin{align}\label{proof:P8:syst}
\begin{cases}
{\pmb \mu}^T -Q{\pmb w}^T-{\pmb c}^T-\lambda{\pmb 1}^T={\pmb 0}^T \\
{\pmb 1}{\pmb w}^T=1
\end{cases}.
\end{align}
Multiplying the first equation of
\eqref{proof:P8:syst} by $Q^{-1}$ gives
\begin{align}\label{proof:P8:w1}
    {\pmb w}^T = Q^{-1}\left({\pmb \mu}-{\pmb c}-\lambda{\pmb 1}\right)^T.
\end{align}
Inserting this expression into the second equation of the system yields
\begin{align*}
1 = {\pmb 1}Q^{-1}\left({\pmb \mu}-{\pmb c}-\lambda{\pmb 1}\right)^T
\qquad
\implies 
\qquad
\lambda = \frac{{\pmb 1}Q^{-1}{\pmb \mu}^T - {\pmb 1}Q^{-1}{\pmb c}^T - 1}{{\pmb 1}Q^{-1}{\pmb 1}^T}
\end{align*}

Substituting the above identity into \eqref{proof:P8:w1} yields
\begin{align}\label{proof:P8:w2}
    {\pmb w}^T = \frac{1+{\pmb 1}Q^{-1}{\pmb c}^T-{\pmb 1}Q^{-1}{\pmb \mu}^T}{{\pmb 1}Q^{-1}{\pmb 1}^T}Q^{-1}{\pmb 1}^T + Q^{-1}{\pmb \mu}^T- Q^{-1}{\pmb c}^T.
\end{align}

Since $Q$ is positive definite, the Hessian of $L$
with respect to ${\pmb w}$ equals $-Q$, which is negative definite. Therefore, \eqref{proof:P8:w2} is the unique global maximizer.

Finally, substituting
\eqref{proof:P8:w2}
into the utility function and simplifying, we obtain
\begin{align*}
\max\limits_{{\pmb w}\in\mathcal{P}}U({\pmb w}) 
=\frac12\left(
c-2e+g-\frac{(b-f-1)^2}{a}
\right)-k,
\end{align*}
where $a = {\pmb 1}Q^{-1}{\pmb 1}^T$, $b={\pmb 1}Q^{-1}{\pmb \mu}^T$, $c={\pmb \mu}Q^{-1}{\pmb \mu}^T$, $e ={\pmb \mu}Q^{-1}{\pmb c}^T $, $f = {\pmb 1}Q^{-1}{\pmb c}^T$, and $g = {\pmb c}Q^{-1}{\pmb c}^T$.

The efficiency of the portfolio given by \eqref{proof:P8:w2} follows by
contradiction. Suppose that it is not efficient. Then there exists another
portfolio ${\pmb w^\ast}$ satisfying ${\pmb1}{\pmb w^\ast}^T=1$ such that
\[
{\pmb\mu}{\pmb w^\ast}^T
\geqslant
{\pmb\mu}{\pmb w}^T,
\qquad
R_P({\pmb w^\ast})
\leqslant
R_P({\pmb w}),
\]
where at least one of the inequalities is strict. Consequently,
\[
{\pmb\mu}{\pmb w^\ast}^T
-
R_P({\pmb w^\ast})
>
{\pmb\mu}{\pmb w}^T
-
R_P({\pmb w}),
\]
which contradicts the fact that ${\pmb w}$ maximizes the utility function.
\end{proof}

\begin{proof}[Proof of Proposition \ref{P9}]
Consider the Lagrangian
\begin{align*}
    L({\pmb w}_F,\,\lambda) = r_fw_f + {\pmb w}{\pmb \mu}^T - \left(\frac{1}{2}{\pmb w}Q{\pmb w}^T + {\pmb c}{\pmb w}^T + c_fw_f + k\right) - \lambda({\pmb 1}{\pmb w}^T +w_f - 1).
\end{align*}
By computing the partial derivatives of $L({\pmb w}_F,\,\lambda)$ with respect to
$w_1,\,w_2,\,\ldots,\,w_n$, $w_f$ and $\lambda$ we obtain
\begin{align}\label{proof:P9:syst}
\begin{cases}
{\pmb \mu}^T -Q{\pmb w}^T-{\pmb c}^T-\lambda{\pmb 1}^T={\pmb 0}^T \\
r_f -c_f - \lambda = 0
\\
{\pmb 1}{\pmb w}^T + w_f=1
\end{cases}.
\end{align}

The second equation of the system \eqref{proof:P9:syst} yields $\lambda = r_f - c_f$. Multiplying the first equation of
\eqref{proof:P9:syst} by $Q^{-1}$ and inserting $\lambda$ value gives
\begin{align}\label{proof:P9:w1}
{\pmb w}^T = Q^{-1}\left(({\pmb \mu}-r_f{\pmb 1})^T - ({\pmb c}-c_f{\pmb 1})^T\right) = Q^{-1}\left({\tilde{\pmb \mu}}^T - {\tilde{\pmb c}}^T\right).
\end{align}
Since $Q$ is positive definite, the Hessian of $L$
with respect to ${\pmb w}$ equals $-Q$, which is negative definite. Therefore, \eqref{proof:P9:w1} is the unique global maximizer.

Inserting \eqref{proof:P9:w1} into the third equation of the system \eqref{proof:P9:syst} yields
\begin{align*}
w_f = 1-{\pmb 1}{\pmb w}^T=1 - {\pmb 1}Q^{-1}(\tilde{\pmb \mu} - \tilde{\pmb c})^T.
\end{align*}

By using
\begin{align*}
U({\pmb w}_F)
=
r_f
+
{\pmb w}({\pmb\mu}-r_f{\pmb1})^T
-
\left(
\frac12{\pmb w}Q{\pmb w}^T
+
({\pmb c}-c_f{\pmb1}){\pmb w}^T
+
(k+c_f)
\right),
\end{align*}
we obtain
\begin{align*}
\max\limits_{{\pmb w}\in\mathcal P}U({\pmb w})
&=
r_f-\tilde{k}
+
(\tilde{\pmb\mu}-\tilde{\pmb c})
Q^{-1}
(\tilde{\pmb\mu}-\tilde{\pmb c})^T
-\frac12
(\tilde{\pmb\mu}-\tilde{\pmb c})
Q^{-1}
(\tilde{\pmb\mu}-\tilde{\pmb c})^T
\\
&=
r_f-\tilde{k}
+\frac12
(\tilde{\pmb\mu}-\tilde{\pmb c})
Q^{-1}
(\tilde{\pmb\mu}-\tilde{\pmb c})^T
\\
&=
r_f-\tilde{k}+\frac{p+q-2s}{2},
\end{align*}
where
\begin{align*}
p=\tilde{\pmb\mu}Q^{-1}\tilde{\pmb\mu}^T,
\qquad
q=\tilde{\pmb c}Q^{-1}\tilde{\pmb c}^T,
\qquad
s=\tilde{\pmb\mu}Q^{-1}\tilde{\pmb c}^T.
\end{align*}

The efficiency of the portfolio \eqref{proof:P9:w1} follows by
the same arguments as at the end of 
\end{proof}

\section{Example}\label{sec:examples}

In this section, we present a numerical example illustrating the theoretical results formulated in Propositions \ref{P1}--\ref{P9}. The example is based on synthetic data chosen solely for illustrative purposes rather than on empirical financial data. Its purpose is to demonstrate the application of the proposed formulas and to verify the consistency of the theoretical developments. We use MATLAB \cite{MATLAB} for the computations and visualizations described. 

\begin{ex}
Consider a portfolio consisting of four hypothetical risky assets with an expected return vector ${\pmb \mu}$, symmetric positive definite matrix $Q$, and linear adjustment vector ${\pmb c}$:
\begin{align*}
{\pmb \mu} = (2,\, 5,\, 9,\, 14), \qquad 
Q = \begin{pmatrix}
6 & 2 & -1 & 0 \\
2 & 5 & 1 & 1 \\
-1 & 1 & 4 & 1 \\
0 & 1 & 1 & 3
\end{pmatrix},
\qquad
{\pmb c} = (3,\,-4,\,2,\,-1).
\end{align*}
In addition, assume that a risk-free asset is available with $r_f = 1$ and $c_f =0$, and say that short selling is permitted. We:
\begin{enumerate}[label=\textup{(\alph*)}]
\item compute the quantities appearing in
Propositions~\ref{P1}--\ref{P9}, illustrate the efficient frontiers
from Propositions~\ref{P4} and~\ref{P5}, and determine the portfolios
characterized by the remaining propositions together with their
corresponding risks and expected returns;

\item illustrate how the efficient frontier changes when the linear adjustment vector is replaced by $\lambda{\pmb c}$, where
$\lambda\in\{-5/8,\,0,\,5/8,\,5/4,\,15/8,\,5/2,\,25/8\}$;

\item illustrate how the efficient frontier changes when varying the parameter\\
$c_f\in\{-1,\,0,\,1,\,2,\,3,\,4\}$;

\item illustrate how the efficient frontier changes when varying the parameter $k=1/2\,{\pmb c}Q^{-1}{\pmb c}^T+\varepsilon$, where $\varepsilon\in\{0,\,1,\,2,\,3,\,4\}$.

\item take one more linear adjustment vector ${\pmb c}_2=-{\pmb c}$ and illustrate the generalized fund separation statement given in Proposition \ref{P7}. 
\end{enumerate}
\end{ex}
The subsequent numerical values are generally rounded to two decimal places. Consequently, minor discrepancies may occur in the reported portfolio weights (they may not sum exactly to 1), risks, returns, and other quantities.

Consider (a). For the given data, ${\pmb c}Q^{-1}{\pmb c}^T/2= 6.10$. We choose 
\[
k=\frac12{\pmb c}Q^{-1}{\pmb c}^T+1=7.10
\]
to satisfy the assumption
\[
k\geqslant\frac12{\pmb c}Q^{-1}{\pmb c}^T,
\]
which yields $R_P(\pmb{w})>0$.

Table~\ref{T1:Obtimization:Rezult} summarizes the portfolios together with their expected returns and risk values obtained from Propositions~\ref{P1}, \ref{P2}, \ref{P6}, \ref{P8} and~\ref{P9}. By $F$ we denote the risk-free asset. A portfolio without $F$ is understood as $(w_1,\,w_2,\,w_3,\,w_4)$, while with $F$ as $(w_1,\,w_2,\,w_3,\,w_4,\,w_f)$. 

\begin{table}[H]
    \centering
    \begin{tabular}{lccc}
    \toprule\toprule

    \textbf{Portfolio}
    &
    \makecell{\textbf{Assigned weights} ${\pmb w}$}
    &
    \makecell{\textbf{Expected return} $\mu$}
    &
    \makecell{\textbf{Risk} $R_P$}
    \\
    \midrule

    Minimum risk
    &
    $(-0.72,\,1.54,\,-0.67,\,0.85)$
    &
    12.16
    &
    3.06
    \\

    \midrule

    Maximum SR
    &
    $(-1.26,\,1.17,\,-0.97,\,2.05)$
    &
    23.38
    &
    6.14
    \\

    \midrule

    Tangency point
    &
    $(-1.03,\,1.32,\,-0.84,\,1.55)$
    &
    18.63
    &
    4.09
    \\

    \midrule

    \makecell[l]{Maximum utility\\(without $F$)}
    &
    $(-1.70,\,0.87,\,-1.21,\,3.04)$
    &
    32.59
    &
    13.28
    \\

    \midrule

    \makecell[l]{Maximum utility\\(with $F$)}
    &
    $(-0.67,\,1.00,\,0.00,\,4.33,\,-3.67)$
    &
    60.67
    &
    31.77
    \\

    \bottomrule
    \end{tabular}
    \caption{Selected portfolios with their expected returns and corresponding risks.}
    \label{T1:Obtimization:Rezult}
\end{table}

Table \ref{T2:Obtimization:Rezult} presents the efficient frontiers derived in Proposition \ref{P4}, Corollary \ref{C}, and Proposition \ref{P5}.

\begin{table}[H]
    \centering
\begin{tabular}{ll}
\toprule\toprule
\multicolumn{2}{c}{\textbf{Efficient frontiers}}\\
\midrule

\makecell{Without risk-free asset}
&
\makecell[l]{
$R_P = 0.02\mu^2 -0.60\mu + 6.68,\quad \mu\geqslant 12.16,$\\
$\mu = \sqrt{40.85\,(R_P-3.06)}+12.16,\quad R_P\geqslant 3.06$
}
\\
\midrule

\makecell{With risk-free asset}
&
$\mu=\sqrt{61.53\,(2R_P-2.00)}-0.87,\quad R_P\geqslant 1.00$
\\
\bottomrule
\end{tabular}
    \caption{Analytic expressions of the efficient frontiers with and without a risk-free asset.}
    \label{T2:Obtimization:Rezult}
\end{table}
Figure \ref{Figurea_} depicts the portfolios and curves reported in Tables \ref{T1:Obtimization:Rezult} and \ref{T2:Obtimization:Rezult}. In addition, to illustrate the mutual fund separation given in Proposition \ref{P7}, we randomly select three efficient portfolios without a risk-free asset:
\begin{align*}
&{\pmb w}_1=(-1.13,\,1.26,\,-0.90,\,1.77),\quad \mu_{{\pmb w}_1}=20.73,\quad R_P({\pmb w}_1) = 4.86,\\
&{\pmb w}_2=(-1.74,\,0.83,\,-1.23,\,3.14),\quad \mu_{{\pmb w}_2} =33.59,\quad R_P({\pmb w}_2) = 14.30,\\
&{\pmb w}_3=(-2.36,\,0.41,\,-1.57,\,4.52),\quad \mu_{{\pmb w}_3} =46.43,\quad R_P({\pmb w}_3) = 31.82.
\end{align*}
And three efficient portfolios that include the risk-free asset:
\begin{align*}
{\pmb w}_1^*
    &=(-1.16,\,1.45,\,-1.15,\,0.50,\,1.37),
&
\mu_{{\pmb w}_1^*}
    &=2.93,
&
R_P({\pmb w}_1^*)
    &=1.12,
\\
{\pmb w}_2^*
    &=(-1.10,\,1.39,\,-1.00,\,1.01,\,0.71),
&
\mu_{{\pmb w}_2^*}
    &=10.54,
&
R_P({\pmb w}_2^*)
    &=2.06,
\\
{\pmb w}_3^*
    &=(-0.91,\,1.22,\,-0.58,\,2.42,\,-1.15),
&
\mu_{{\pmb w}_3^*}
    &=31.83,
&
R_P({\pmb w}_3^*)
    &=9.69.
\end{align*}
Then by selecting the target returns $\mu_0 = 51.57$, $\mu_0^* = 34.72$ on the corresponding efficient frontiers and choosing the coefficients $\lambda_1 =-0.37,\,\lambda_2=0.33,\,\lambda_3=1.03$ and $\lambda^*_1=-0.20,\,\lambda^*_2=0.13,\,\lambda^*_3=1.06$ that satisfy the system \eqref{syst}, we conclude that the linear combinations $\lambda_1{\pmb w}_1+\lambda_2\pmb{w}_2+\lambda_3{\pmb w}_3$ and $\lambda^*_1{\pmb w}^*_1+\lambda^*_2\pmb{w}^*_2+\lambda^*_3{\pmb w}^*_3$ are also the efficient portfolios; see Figure \ref{Figurea_}. 

\begin{figure}[H]
\centering
\includegraphics[width=160mm,height=125mm]{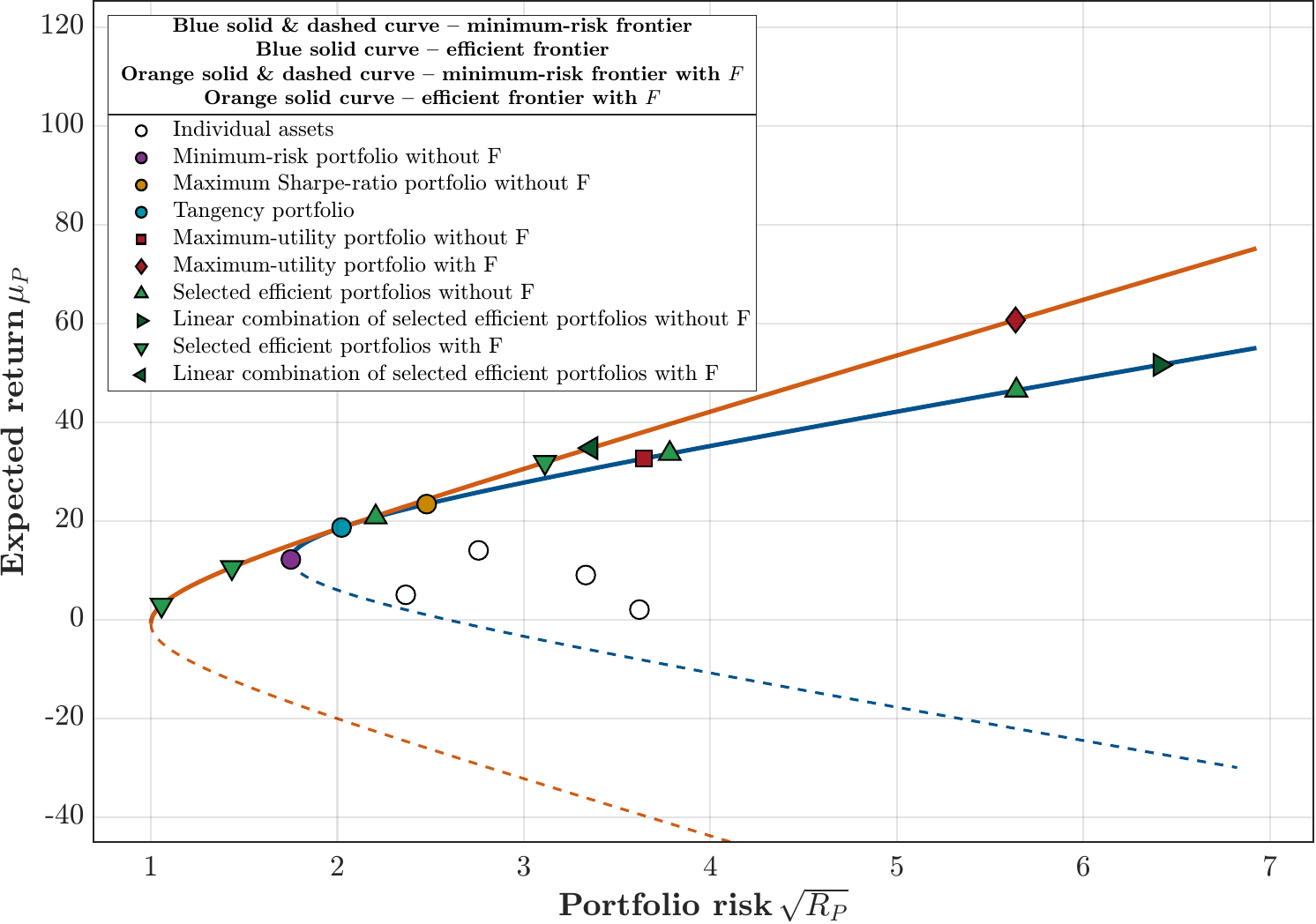}
\caption{Visualization of portfolio selection theory with a general quadratic risk measure}
\label{Figurea_}
\end{figure}

Consider (b). For each $\lambda\in\{-5/8,\,0,\,5/8,\,5/4,\,15/8,\,5/2,\,25/8\}$, we set up the risk measures
\begin{align}\label{k_no_shift}
R_{P,\,\lambda}({\pmb w})=\frac{1}{2}{\pmb w}Q{\pmb w}^T+\lambda {\pmb c}{\pmb w}^T+k,\qquad k=\frac{1}{2}\lambda^2{\pmb c}Q^{-1}{\pmb c}^T, 
\end{align}
and depict the minimal risk curves as provided in Proposition \ref{P4} and Proposition \ref{P5} with appropriate replacement of ${\pmb w}$ with ${\pmb w}_F$. Obviously, $\lambda=0$ corresponds to the classical mean-variance case. Moreover, if $k$ is as chosen, without additional increment, then the efficient frontier in \eqref{cmc} becomes a line, see the right-hand-side picture in Figure \ref{Figureb}.

\begin{figure}[H]
\centering
\includegraphics[width=160mm,height=80mm]{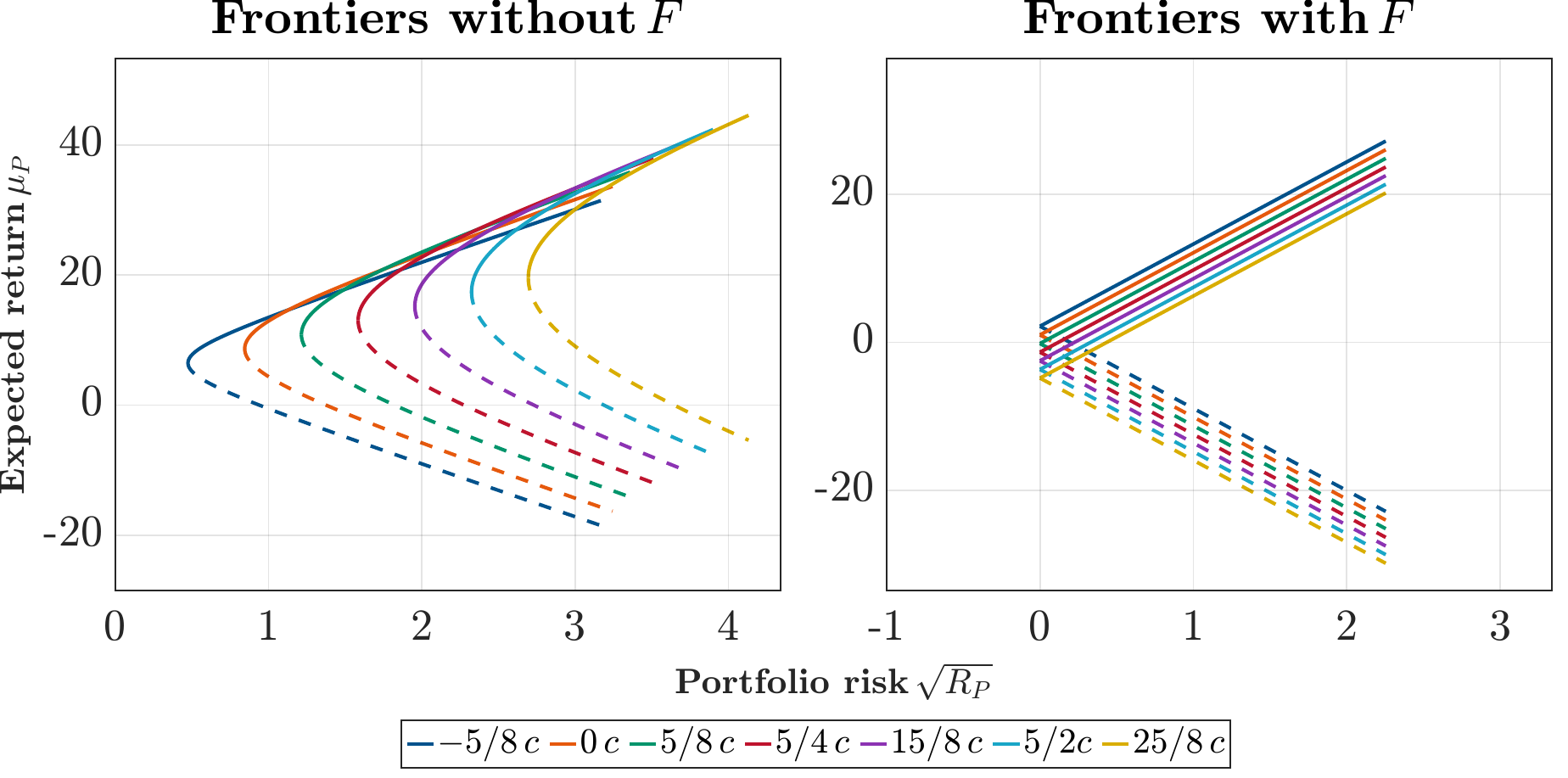}
\caption{Effect of $\lambda{\pmb c}$ for the efficient frontiers}
\label{Figureb}
\end{figure}

If we replace $k$ in \eqref{k_no_shift} with $k=\lambda^2{\pmb c}Q^{-1}{\pmb c}/2+1$ we then obtain a different efficient frontiers compared to those in Figure \ref{Figureb}, see Figure \ref{Figureb2}. 

\begin{figure}[H]
\centering
\includegraphics[width=160mm,height=80mm]{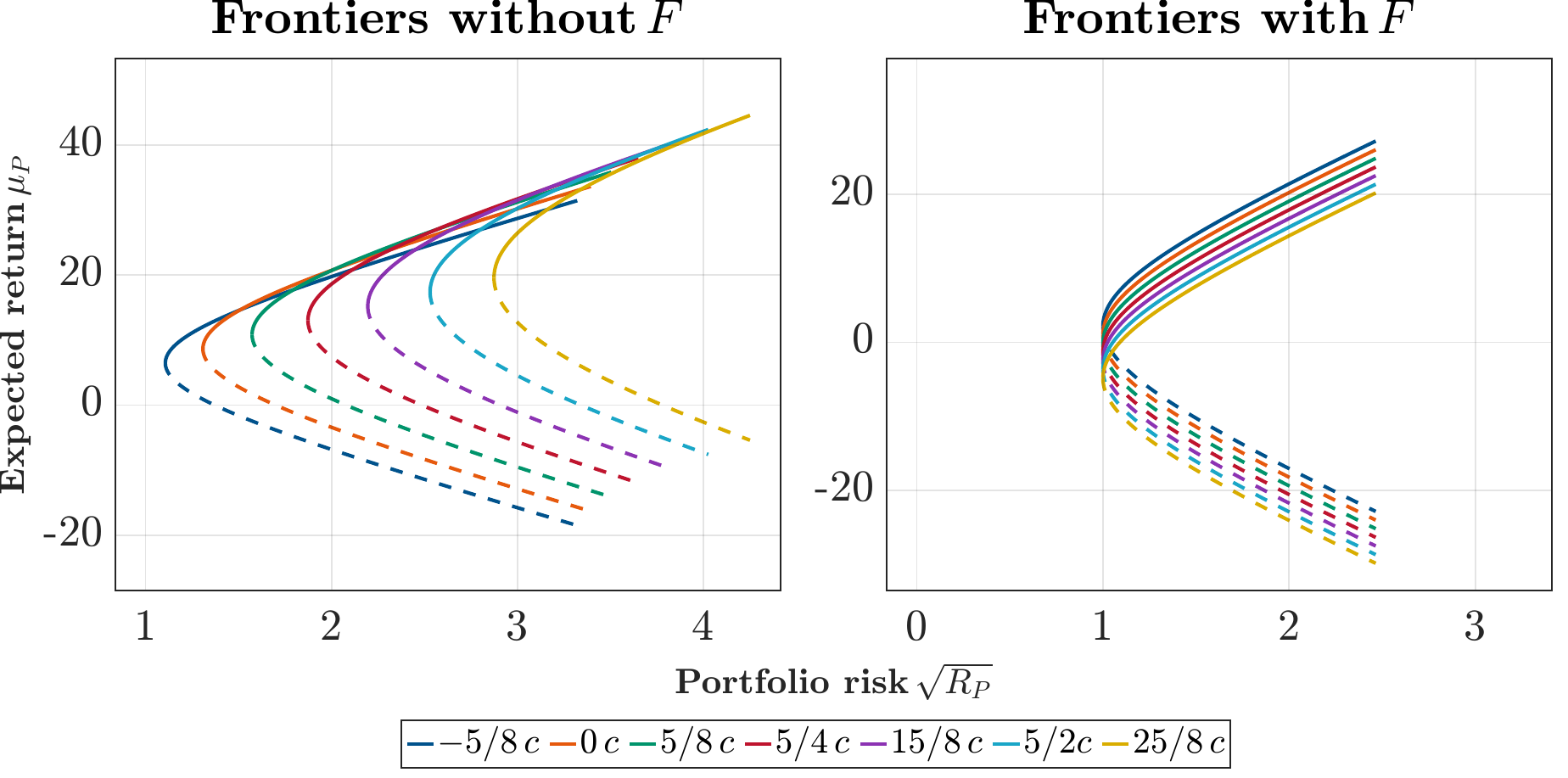}
\caption{Effect of $\lambda{\pmb c}$ for the efficient frontiers with adjusted constant $k$}
\label{Figureb2}
\end{figure}

Figures \ref{Figureb} and \ref{Figureb2} show that increasing $\lambda$ shifts the efficient risk–return frontier toward higher risk and return without $F$, but toward lower return under the same risk when $F$ is available.

\medskip

Consider (c). For each $c_f\in\{-1/2,\,0,\,1,\,2,\,3,\,4\}$, we set up the risk measures
\begin{align*}
R_{P,\,c_f}({\pmb w})=\frac{1}{2}{\pmb w}Q{\pmb w}^T+\tilde{\pmb c}{\pmb w}^T+\tilde{k},\qquad {\tilde k}= \frac{1}{2}{\pmb c}Q^{-1}{\pmb c}^T + 1 + c_f,\qquad \tilde{\pmb c}={\pmb c}-{\pmb 1}c_f,
\end{align*}
and depict the efficient frontiers with the risk-free asset given by
Proposition~\ref{P5}. Obviously, $c_f=0$ corresponds to case (a)
with the risk-free investment appended: this is the orange curve in Figure \ref{Figurec_} with $c_f=0$ and also the orange curve in Figure \ref{Figurea_}. The blue curve in all panels of Figure~\ref{Figurec_} is fixed and represents the efficient frontier without the risk-free asset for the parameters
$Q$, ${\pmb c}$, and $k$ specified in case (a). In other words, the blue curve in all the panels of Figure \ref{Figurec_} is the same as the blue curve in Figure \ref{Figurea_}.
\begin{figure}[H]
\centering
\includegraphics[width=150mm,height=175mm]{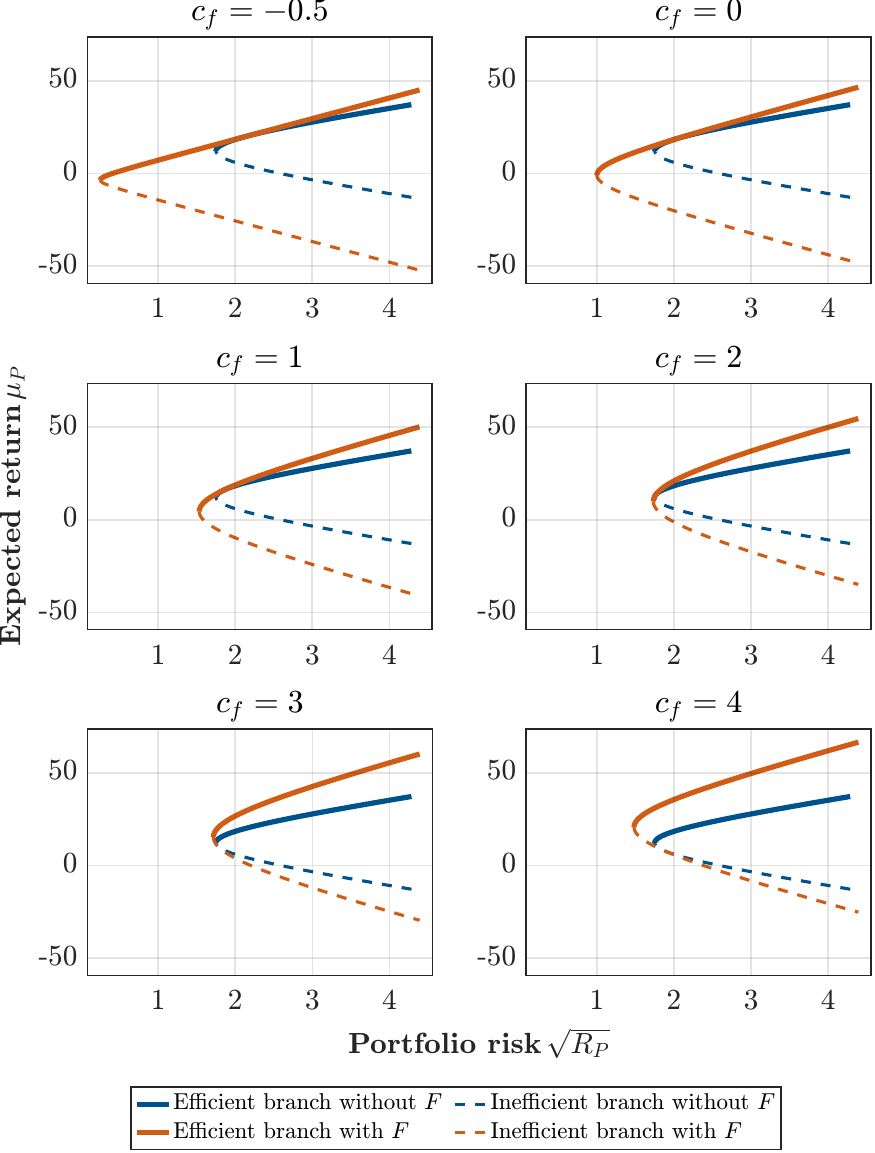}
\caption{Effect of $c_f$ on the efficient frontier with a risk-free asset}
\label{Figurec_}
\end{figure}
Figure \ref{Figurec_} shows that increasing $c_f$, in general, moves the minimum-risk point of the efficient frontier with the risk-free asset toward higher risk and
expected return. Large positive $c_f$ ($=3,\,4$) can make the risk-free asset more attractive than risky investment as the relevant tangency point moves to the lower branch. This corresponds to the negative slope of the capital allocation line in the classical Markowitz theory.

Consider (d). For each $\varepsilon\in\{0,\,1,\,2,\,3,\,4\}$, we set up the risk measures
\begin{align*}
R_{P,\,\varepsilon}({\pmb w})=\frac{1}{2}{\pmb w}Q{\pmb w}^T+{\pmb c}{\pmb w}^T+k,\qquad  k=\frac{1}{2}{\pmb c}Q^{-1}{\pmb c}^T+\varepsilon,
\end{align*}
and depict the minimal risk curves as provided in Proposition \ref{P4} and Proposition \ref{P5} under the replacement of ${\pmb w}$ with ${\pmb w}_F$, where $r_f=1$ and $c_f=0$. Obviously, $\varepsilon=1$ corresponds to case (a). 

\begin{figure}[H]
\centering
\includegraphics[width=160mm,height=85mm]{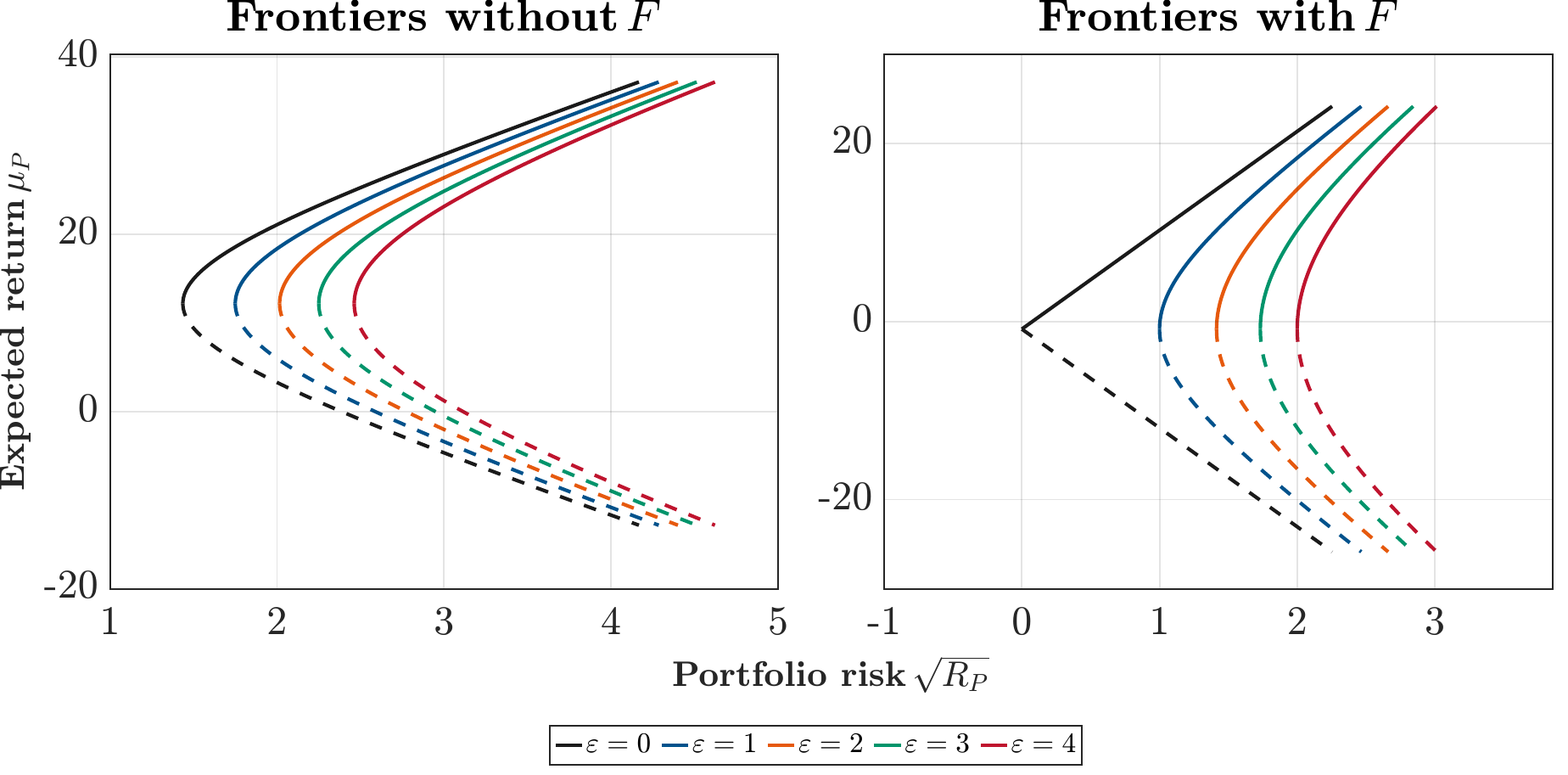}
\caption{Effect on $k$ shift for the efficient frontiers with and without risk-free investment}
\label{Figured_}
\end{figure}

Figure \ref{Figured_} illustrates that varying $k$, shifts the efficient frontiers
horizontally, with the minimum attainable risk changing according to
$\varepsilon$.

Consider (e). For the given ${\pmb c}={\pmb c}_1=(3,\,-4,\,2,\,-1)$, we take one more linear adjustment vector ${\pmb c}_2=-{\pmb c}_1$ and consider two risk measures:
\begin{align*}
R_{P,\,i}({\pmb w})=\frac{1}{2}{\pmb w}Q{\pmb w}^T+{\pmb c}_i{\pmb w}^T+k_i,\qquad  k_i=\frac{1}{2}{\pmb c}_iQ^{-1}{\pmb c}_i^T+1,\qquad i=1,\,2,
\end{align*}
with the intention to illustrate Proposition \ref{P7} more broadly than the the case (a). 

Let ${\pmb w}_{\min,\,1}$ and ${\pmb w}_{\min,\,2}$ denote the corresponding minimum-risk portfolios for the chosen linear adjustment vectors ${\pmb c}_1$ and ${\pmb c}_2$. Then, by Proposition \ref{P1}, these portfolios with their underlying returns and risks are (the blue and orange circles in Figure \ref{Figuree1}):
\begin{align*}
{\pmb w}_{\min,\,1}
    &= (-0.72,\,1.54,\,-0.67,\,0.85),
&
\mu_{\min,\,1}
    &= 12.16,
&
R_P({\pmb w}_{\min,\,1})
    &= 3.06,
\\
{\pmb w}_{\min,\,2}
    &= (1.28,\,-1.46,\,1.33,\,-0.15),
&
\mu_{\min,\,2}
    &= 5.16,
&
R_P({\pmb w}_{\min,\,2})
    &= 1.06.
\end{align*}
By Proposition \ref{P4}, depict two efficient frontiers corresponding to the risk measures $R_{P,\,1}$ and $R_{P,\,2}$: the blue and orange curves in Figure \ref{Figuree1}. We then randomly select two portfolios from each of the depicted efficient frontiers (the blue and orange triangulars in Figure \ref{Figuree1}):
\begin{align*}
{\pmb w}_{1}
    &= (-0.87,\,1.43,\,-0.75,\,1.19),
&
\mu_{0,\,1}
    &= 15.27,
&
R_P({\pmb w}_{1})
    &= 3.30,
\\
{\pmb w}_{2}
    &= (0.26,\,-2.16,\,0.77,\,2.14),
&
\mu_{0,\,2}
    &= 26.51,
&
R_P({\pmb w}_{2})
    &= 12.23.
\end{align*}

Suppose that we want to form a portfolio ${\pmb w}_{\textup{new}}=\lambda_1{\pmb w}_1+\lambda_2{\pmb w}_2$, whose expected return is $\mu_0=17.59$. Then, we set up and solve the system \eqref{syst}
\begin{align}\label{syst_2x2_ex}
\begin{cases}
\lambda_1+\lambda_2=1\\
\lambda_1 \mu_{0,\,1}+\lambda_2 \mu_{0,\,2}=\mu_0
\end{cases}\qquad \Rightarrow \qquad \lambda_1=0.79,\qquad\lambda_2=0.21
\end{align}

According to Proposition \ref{P7}, the portfolio 
\begin{align*}
{\pmb w}_{\textup{new}}=\lambda_1{\pmb w}_1+\lambda_2{\pmb w}_2
=(-0.63,\,0.69,\,-0.44,\,1.38)
\end{align*}
with the corresponding linear adjustment vector 
\begin{align*}
{\pmb c}_{\textup{new}}=\lambda_1{\pmb c}_1+\lambda_2{\pmb c}_2=(1.76,\,-2.35,\,1.17,\,-0.59)
\end{align*}
is efficient because $\mu_0=17.59>10.71=\mu_{\min}$, where $\mu_{\min}$ is the return of the minimum risk portfolio with respect to the linear adjustment vector ${\pmb c}_{\textup{new}}$, see Figure \ref{Figuree1}.

\begin{figure}[H]
\centering
\includegraphics[width=160mm,height=135mm]{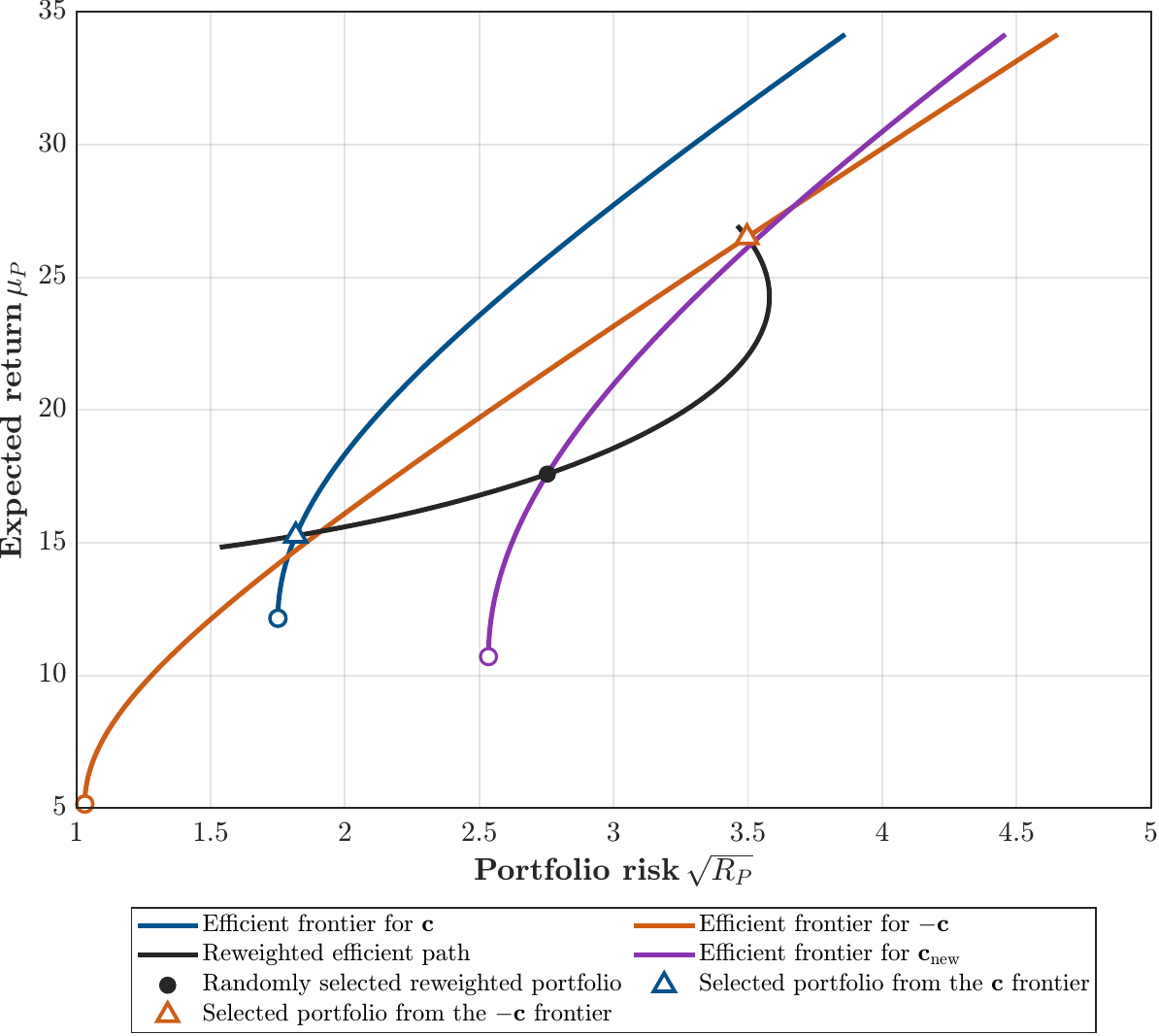}
\caption{Visual of the generalized mutual fund separation theorem}
\label{Figuree1}
\end{figure}

Let us generalize this example a bit by letting $ 14.83<\mu_0<26.95$ in the system \eqref{syst_2x2_ex}. We then obtain a set of portfolios 
\begin{align*}
\lambda {\pmb{w}_1} + (1-\lambda){\pmb{w}_2},
\end{align*}
where
\begin{align*}
-0.04=\frac{26.95-\mu_{0,\,2}}{\mu_{0,\,1}-\mu_{0,\,2}}\leqslant \lambda
\leqslant
\frac{14.83-\mu_{0,\,2}}{\mu_{0,\,1}-\mu_{0,\,2}}=1.04.
\end{align*}
The corresponding risks and returns of $\lambda {\pmb{w}_1} + (1-\lambda){\pmb{w}_2}$ are illustrated by the black curve in Figure \ref{Figuree1}. By Proposition \ref{P7}, each portfolio on the black curve in Figure \ref{Figuree1} lies on its minimal risk frontier associated with the linear adjustment vector $\lambda {\pmb c}_1+(1-\lambda){\pmb c}_2$, see the black dot and violet curve in Figure \ref{Figuree1} for the particular case when $\mu_0=17.59$.

\bibliographystyle{plain}
\bibliography{bibliography}

\end{document}